\newtheorem{definition}{Definition}
\newtheorem{theorem}{Theorem}
\newtheorem{lemma}{Lemma}
\newtheorem{operation}{Operation}
\newenvironment{proof}{{\sc Proof. }}{\hfill$\Box$\vspace{0.2in}}
\newcommand{\Max}[1]{{\sc MaxP$^{{#1}+}$PC}}
\def\mcP{\mathcal{P}}
\def\mcQ{\mathcal{Q}}
\title{Approximation algorithms for covering vertices by long paths%
\footnote{An extended abstract appears in the {\em Proceedings of Conference MFCS 2022}.}}
\author{
	Mingyang~Gong\thanks{Department of Computing Science, University of Alberta.  Edmonton, Canada.
	\texttt{\{mgong4, bedgar, jfan10, guohui\}@ualberta.ca}}
\and
	Brett~Edgar$^\dagger$
\and
	Jing~Fan$^\dagger$\thanks{College of Arts and Sciences, Shanghai Polytechnic University.  Shanghai, China.}
\and
	Guohui~Lin$^\dagger$\thanks{Correspondence author. \texttt{guohui@ualberta.ca}}
\and
	Eiji~Miyano\thanks{Department of Artificial Intelligence, Kyushu Institute of Technology. Iizuka, Japan.
	\texttt{miyano@ai.kyutech.ac.jp}}}%
\date{\today}
\begin{document}
\maketitle

\begin{abstract}
Given a graph, the general problem to cover the maximum number of vertices by a collection of vertex-disjoint long paths
seemingly escapes from the literature.
A path containing at least $k$ vertices is considered long.
When $k \le 3$, the problem is polynomial time solvable;
when $k$ is the total number of vertices, the problem reduces to the Hamiltonian path problem, which is NP-complete.
For a fixed $k \ge 4$, the problem is NP-hard and
the best known approximation algorithm for the weighted set packing problem implies a $k$-approximation algorithm.
To the best of our knowledge, there is no approximation algorithm directly designed for the general problem;
when $k = 4$, the problem admits a $4$-approximation algorithm which was presented recently.
We propose the first $(0.4394 k + O(1))$-approximation algorithm for the general problem and an improved $2$-approximation algorithm when $k = 4$.
Both algorithms are based on local improvement, and their theoretical performance analyses are done via amortization
and their practical performance is examined through simulation studies.

\paragraph{2012 ACM Subject Classification}{Theory of computation $\rightarrow$ Packing and covering problems}

\paragraph{Keywords}{Path cover, $k$-path, local improvement, amortized analysis, approximation algorithm} 
\end{abstract}

\subsubsection*{Acknowledgments.}
This research is supported by a visiting scholarship from the Shanghai Polytechnic University (JF),
the NSERC Canada (BE, GL),
and the KAKENHI Grants JP21K11755 and JP17K00016 and JST CREST JPMJR1402 (EM).

\newpage
\section{Introduction}
Path Cover ({\sc PC}) is one of the most well-known NP-hard optimization problems in algorithmic graph theory~\cite{GJ79},
in which given a simple undirected graph $G = (V, E)$ one wishes to find a minimum collection of vertex-disjoint paths that cover all the vertices,
that is, every vertex of $V$ is on one of the paths.
It has numerous applications from the real life, such as transportation networks, communication networks and networking security.
In particular, it includes the Hamiltonian path problem~\cite{GJ79} as a special case,
which asks for the existence of a single path covering all the vertices.

The Hamiltonian path problem is NP-complete;
therefore, the {\sc PC} problem cannot be approximated within ratio $2$ if P $\ne$ NP.
In fact, to the best of our knowledge, there is no $o(|V|)$-approximation algorithm for the {\sc PC} problem.
In the literature, several alternative objective functions have been proposed and studied~\cite{BK06,AN07,PH08,AN10,RTM14,CCC18,GW20,CCL21}.
For example,
Berman and Karpinski~\cite{BK06} tried to maximize the number of edges on the paths in a path cover,
which is equal to $|V|$ minus the number of paths, and proposed a $7/6$-approximation algorithm.
Chen et al.~\cite{CCC18,CCL21} showed that finding a path cover with the minimum total number of order-$1$ and order-$2$ paths
(where the order of a path is the number of vertices on the path; i.e., singletons and edges) can be done in polynomial time,
but it is NP-hard to find a path cover with the minimum number of paths of order at most $\ell$ when $\ell \ge 3$.

Recently, Kobayashi et al.~\cite{KLM22} generalized the problem studied by Chen et al.~\cite{CCC18,CCL21}
to assign a weight representing its profit or cost to each order-$\ell$ path,
with the goal of finding a path cover of the maximum weight or the minimum weight, respectively.
For instance, when the weight $f(\ell)$ of an order-$\ell$ path is $f(\ell) = 1$ for any $\ell \le k$ and $f(\ell) = 0$ for any $\ell \ge k+1$,
where $k$ is a fixed integer,
the {\em minimization} problem reduces to the problem studied by Chen et al.~\cite{CCC18,CCL21};
when $f(\ell)$ is $f(\ell) = 1$ for any $\ell \le k$ and $f(\ell) = +\infty$ for any $\ell \ge k+1$,
the {\em minimization} problem is the so-called $k$-path partition problem \cite{YCH97,MT07,CGL19a,CGL19b,CGS19,CCK21a};
when $f(\ell)$ is $f(\ell) = 0$ for any $\ell \le |V| - 1$ but $f(|V|) \ne 0$,
the {\em maximization} problem reduces to the Hamiltonian path problem.

Given an integer $k \ge 4$, in the special case where $f(\ell) = 0$ for any $\ell < k$ but $f(\ell) = \ell$ for any $\ell \ge k$,
the {\em maximization} problem can be re-phrased as to find a set of vertex-disjoint paths of order at least $k$ to cover the most vertices,
denoted as \Max{k}.
The \Max{4} problem (i.e., $k = 4$) is complementary to finding a path cover with the minimum number of paths of order at most $3$~\cite{CCC18,CCL21},
and thus it is NP-hard.
Kobayashi et al.~\cite{KLM22} presented a $4$-approximation algorithm for \Max{4} by greedily adding an order-$4$ path or
extending an existing path to the longest possible.

For a fixed integer $k \ge 4$, the \Max{k} problem is NP-hard too~\cite{KLM22};
to the best of our knowledge there is no approximation algorithm designed directly for it.
Nevertheless, the \Max{k} problem can be cast as a special case of the Maximum Weighted $(2k-1)$-Set Packing problem~\cite{GJ79},
by constructing a set of $\ell$ vertices when they are traceable (that is, they can be formed into a path) in the given graph
and assigning its weight $\ell$, for every $\ell = k, k+1, \ldots, 2k-1$.
(This upper bound $2k-1$ will become clear in the next section.)
The Maximum Weighted $(2k-1)$-Set Packing problem is APX-complete~\cite{Hoc83} and
the best known approximation guarantee is $k - \frac 1{63,700,992} + \epsilon$ for any $\epsilon > 0$~\cite{Neu21}.

In this paper, we study the \Max{k} problem from the approximation algorithm perspective.
The problem and its close variants have many motivating real-life applications in various areas
such as various (communication, routing, transportation, optical etc.) network design~\cite{IMM05}.
For example, when a local government plans to upgrade its subway infrastructures,
the given map of rail tracks is to be decomposed into multiple disjoint lines of stations, each of which will be taken care of by a team of workers.
Besides being disjoint so that while some lines are under construction the other lines can function properly,
each line is expected long enough for the team to work on continuously during a shift without wasting time and efforts
to move themselves and materials from one point to another.
Viewing the map as a graph, the goal of planning is to find a collection of vertex-disjoint long paths to cover the most vertices
(and of course, possibly under some other real traffic constraints).

We contribute two approximation algorithms for the \Max{k} problem,
the first of which is a $(0.4394k + O(1))$-approximation algorithm for any fixed integer $k \ge 4$, denoted as {\sc Approx1}.
We note that {\sc Approx1} is the first approximation algorithm directly designed for the \Max{k} problem,
and it is a local improvement algorithm that iteratively applies one of the three operations, addition, replacement and double-replacement,
each takes $O(|V|^k)$ time and covers at least one more vertex.
While the addition and the replacement operations have appeared in the $4$-approximation algorithm for the \Max{4} problem~\cite{KLM22},
the double-replacement operation is novel and it replaces one existing path in the current path collection with two new paths.
At termination, that is, when none of the local improvement operations is applicable,
we show by an amortization scheme that each path $P$ in the computed solution
is attributed with at most $\rho(k) n(P)$ vertices covered in an optimal solution,
where $n(P)$ denotes the order of the path $P$ and $\rho(k) \le 0.4394k + 0.6576$ for any $k \ge 4$.

The second $O(|V|^8)$-time algorithm, denoted as {\sc Approx2}, is for the \Max{4} problem.
Besides the three operations in {\sc Approx1}, we design two additional operations, re-cover and look-ahead.
The re-cover operation aims to increase the number of $4$-paths in the solution,
and the look-ahead operation covers at least one more vertex by trying multiple paths equivalent to an existing path in the current solution
in order to execute a replacement operation.
With these two more local improvement operations,
we design a refined amortization scheme to show that, on average,
each vertex covered in the computed solution is attributed with at most two vertices covered in an optimal solution.
That is, {\sc Approx2} is a $2$-approximation algorithm for the \Max{4} problem.
We also show a lower bound of $\frac {16}9$ on the worst-case performance ratio of {\sc Approx2}.

The rest of the paper is organized as follows.
In Section 2, we introduce the basic notations and definitions. 
Section 3 is devoted to the \Max{k} problem, where we present the {\sc Approx1} algorithm and its performance analysis.
In Section 4, we present the {\sc Approx2} algorithm for the \Max{4} problem, and outline the performance analysis.
We conduct simulation studies to examine the practical performance of the two algorithms in Section 5,
where we discuss the graph instance generation scheme, the implementation specification, and the numerical results.
We conclude the paper in the last section with some possible future work.

\section{Preliminaries}
For a fixed integer $k \ge 4$, in the \Max{k} problem,
we are given a simple undirected graph and want to find a collection of vertex-disjoint paths of order at least $k$ to cover the maximum number of vertices.

We consider simple undirected graphs in this paper and we fix a graph $G$ for discussion.
Let $V(G)$ and $E(G)$ denote its vertex set and edge set in the graph $G$, respectively.
We simplify $V(G)$ and $E(G)$ as $V$ and $E$, respectively, when the underlying graph is clear from the context.
We use $n(G)$ to denote the {\em order} of $G$, that is, $n \triangleq n(G) = |V|$ is the number of vertices in the graph.
A subgraph $S$ of $G$ is a graph such that $V(S) \subseteq V(G)$ and $E(S) \subseteq E(G)$;
and likewise, $n(S) = |V(S)|$ denotes its order.
Given a subset of vertices $R \subseteq V$, the subgraph of $G$ {\em induced} on $R$ is denoted as $G[R]$,
of which the vertex set is $R$ and the edge set contains all the edges of $E$ each connecting two vertices of $R$.
A (simple) path $P$ in $G$ is a subgraph of which the vertices can be ordered into $(v_1, v_2, \ldots, v_{n(P)})$
such that $E(P) = \{\{v_i, v_{i+1}\}, i = 1, 2, \ldots, n(P)-1\}$.
A path of order $\ell$ is called an {\em $\ell$-path} (also often called a {\em length-$(\ell-1)$ path} in the literature).

In this paper we are most interested in paths of order at least $4$.
In the sequel, given an $\ell$-path $P$ with $\ell \ge 4$, we let $u_j$ denote the vertex of $P$ at distance $j$ from one ending vertex of $P$,
for $0 \le j \le \lceil\frac {\ell}2\rceil - 1$,
and $v_j$ denote the vertex of $P$ at distance $j$ from the other ending vertex, for $0 \le j \le \lfloor\frac {\ell}2\rfloor - 1$.
When $\ell$ is odd, then the center vertex of the path is $u_{\frac {\ell-1}2}$.
This way, a $(2s+1)$-path is represented as $u_0$-$u_1$-$\cdots$-$u_{s-1}$-$u_s$-$v_{s-1}$-$\cdots$-$v_1$-$v_0$,
and a $(2s)$-path is represented as $u_0$-$u_1$-$\cdots$-$u_{s-1}$-$v_{s-1}$-$\cdots$--$v_1$-$v_0$.
Though the path is undirected and the vertex naming is often arbitrary,
sometimes we will pick a particular endpoint of the path to be the vertex $u_0$.

Given another path $Q$, let $Q - P$ denote the subgraph of $Q$ by removing those vertices in $V(P)$, and the edges of $E(Q)$ incident at them, from $Q$.
Clearly, if $V(P) \cap V(Q) = \emptyset$, then $Q - P = Q$;
otherwise, $Q - P$ is a collection of sub-paths of $Q$ each has at least one endpoint that is adjacent to some vertex on $P$ through an edge of $E(Q)$.
For a collection $\mcP$ of vertex-disjoint paths, it is also a subgraph of $G$,
with its vertex set $V(\mcP) = \cup_{P \in \mcP} V(P)$ and edge set $E(\mcP) = \cup_{P \in \mcP} E(P)$.
We similarly define $Q - \mcP$ to be the collection of sub-paths of $Q$ after removing those vertices in $V(\mcP)$ from $V(Q)$,
together with the edges of $E(Q)$ incident at them.
Furthermore, for another collection $\mcQ$ of vertex-disjoint paths, we can define $\mcQ - \mcP$ analogously,
that is, $\mcQ - \mcP$ is the collection of sub-paths of the paths in $\mcQ$ after removing those vertices in $V(\mcP)$ from $V(\mcQ)$,
together with the edges of $E(\mcQ)$ incident at them.

\begin{definition}\label{def1}
{\em (Associatedness)}
Given two collections $\mcP$ and $\mcQ$ of vertex-disjoint paths,
if a path $S$ of $\mcQ - \mcP$ has an endpoint that is adjacent to a vertex $v$ in $V(\mcP) \cap V(\mcQ)$ in $\mcQ$,
then we say $S$ is {\em associated} with $v$.
\end{definition}
One sees that a path $S$ of $\mcQ - \mcP$ can be associated with zero to two vertices in $V(\mcP) \cap V(\mcQ)$,
and conversely, a vertex of $V(\mcP) \cap V(\mcQ)$ can be associated with zero to two paths in $\mcQ - \mcP$.

If the paths of the collection $\mcP$ all have order at least $k$,
then the vertices of $V(\mcP)$ are said {\em covered} by the paths of $\mcP$, or simply by $\mcP$.
Let $R = V - V(\mcP)$.
For any vertex $v \in V(\mcP)$,
{\em an extension} $e(v)$ at the vertex $v$ is a path in the subgraph $G[R]$ of $G$ induced on $R$ which has an endpoint adjacent to $v$ in $G$.
Note that there could be many extensions at $v$, and we use $n(v) = \max_{e(v)} n(e(v))$ to denote the order of the longest extensions at the vertex $v$.

\begin{lemma}
\label{lemma01}
Given two collections $\mcP$ and $\mcQ$ of vertex-disjoint paths in the graph $G$,
for every vertex $v \in V(\mcP)$, any path of $\mcQ - \mcP$ associated with $v$ has order at most $n(v)$.
\end{lemma}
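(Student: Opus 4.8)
The plan is a direct definition-chase: I will show that any path $S$ of $\mcQ - \mcP$ that is associated with $v$ is itself an extension at $v$ in the sense defined just before the lemma, whence $n(S) \le n(v)$ follows immediately from $n(v) = \max_{e(v)} n(e(v))$.

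First I would dispose of the trivial case. By \cref{def1}, only vertices of $V(\mcP) \cap V(\mcQ)$ can have paths associated with them, so if $v \in V(\mcP) \setminus V(\mcQ)$ the statement holds vacuously. Hence I may assume $v \in V(\mcP) \cap V(\mcQ)$ and fix a path $S$ of $\mcQ - \mcP$ that is associated with $v$.

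Next I would verify that $S$ lives entirely inside $G[R]$, where $R = V - V(\mcP)$. By construction $\mcQ - \mcP$ consists of the sub-paths obtained from the paths of $\mcQ$ after deleting every vertex of $V(\mcP)$, so every vertex of $S$ lies in $V(\mcQ) \setminus V(\mcP) \subseteq V - V(\mcP) = R$; moreover the edges of $S$ are edges of the corresponding path $Q \in \mcQ$, hence edges of $E(\mcQ) \subseteq E(G)$ both of whose endpoints lie in $R$. Thus $S$ is a genuine path of the induced subgraph $G[R]$. It then remains to produce the required endpoint adjacency: since $S$ is associated with $v$, by \cref{def1} it has an endpoint $w$ adjacent to $v$ in $\mcQ$, i.e.\ $\{w,v\} \in E(\mcQ)$; because $\mcQ$ is a collection of paths of $G$ we have $E(\mcQ) \subseteq E(G)$, so $w$ is adjacent to $v$ in $G$. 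Combining the two observations, $S$ is a path in $G[R]$ with an endpoint adjacent to $v$ in $G$, which is exactly the definition of an extension $e(v)$ at $v$. Therefore $n(S) \le \max_{e(v)} n(e(v)) = n(v)$, as claimed.

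I do not expect a genuine obstacle here; the argument is purely a matter of unwinding the definitions of $\mcQ - \mcP$, of associatedness, and of an extension. The only point that warrants care is confirming that the vertices of $S$ all belong to $R$ and that the associating edge $\{w,v\}$ is a true edge of $G$ (rather than a merely conceptual adjacency), since these are precisely the two conditions needed to certify $S$ as a valid extension at $v$.
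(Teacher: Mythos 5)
Your proof is correct and follows essentially the same route as the paper's: you show that every path of $\mcQ - \mcP$ lies in $G[V - V(\mcP)]$ and that the associating edge of $E(\mcQ)$ is a genuine edge of $G$, so the associated path is an extension at $v$ and hence has order at most $n(v)$. The paper states this in one sentence; your version merely unwinds the same definitions more explicitly.
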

\begin{proof}
The lemma holds since $\mcQ - \mcP$ is a subgraph of the induced subgraph $G[V - V(\mcP)]$,
that is, every path of $\mcQ - \mcP$ is a path in $G[V - V(\mcP)]$, and the associatedness (see Definition~\ref{def1}) is a special adjacency
through an edge of $E(\mcQ)$.
\end{proof}

Our goal is to compute a collection of vertex-disjoint paths of order at least $k$,
such that it covers the most vertices.
In our local improvement algorithms below, we start with the empty collection $\mcP = \emptyset$
to iteratively expand $V(\mcP)$ through one of a few operations, to be defined later.
Notice that for an $\ell$-path with $\ell \ge 2k$,
one can break it into a $k$-path and an $(\ell-k)$-path by deleting an edge.
Since they cover the same vertices, we assume without loss of generality hereafter that
any collection $\mcP$ inside our algorithms contains vertex-disjoint paths of order in between $k$ and $2k-1$, inclusive.

\section{A $(0.4394k + 0.6576)$-approximation algorithm for \Max{k}}
For a given integer $k \ge 4$,
the best known approximation algorithm for the Maximum Weighted $(2k-1)$-Set Packing problem leads to
an $O(n^{2k-1})$-time $(k - \frac 1{63,700,992} + \epsilon)$-approximation algorithm for the \Max{k} problem, for any $\epsilon > 0$~\cite{Neu21}.
In this section, we define three local improvement operations for our algorithm for the \Max{k} problem, denoted as {\sc Approx1}.
We show later that its time complexity is $O(n^{k+1})$ and its approximation ratio is at most $0.4394k + 0.6576$.

For the current path collection $\mcP$, if there is a path covering $k$-vertices outside of $V(\mcP)$,
then the following operation adds the $k$-path into $\mcP$.

\begin{operation}\label{op1}
For a $k$-path $P$ in the induced subgraph $G[V - V(\mcP)]$,
the {\em Add($P$)} operation adds $P$ to $\mcP$.
\end{operation}

Since finding a $k$-path in the induced subgraph $G[V - V(\mcP)]$, for any $\mcP$, can be done in $O(n^k)$ time,
determining whether or not an addition operation is applicable, and if so then applying it, can be done in $O(n^k)$ time too.
Such an operation increases $|V(\mcP)|$ by $k$.

Recall that a path $P \in \mcP$ is represented as $u_0$-$u_1$-$\cdots$-$v_1$-$v_0$.
Though it is undirected, we may regard $u_0$ the {\em head} vertex of the path and $v_0$ the {\em tail} vertex for convenience.
The next operation seeks to extend a path of $\mcP$ by replacing a prefix (or a suffix) with a longer one.

\begin{operation}\label{op2}
For a path $P \in \mcP$ such that there is an index $t$ and an extension $e(u_t)$ with $n(e(u_t)) \ge t + 1$
(an extension $e(v_t)$ with $n(e(v_t)) \ge t + 1$, respectively),
the {\em Rep($P$)} operation replaces the prefix $u_0$-$u_1$-$\cdots$-$u_{t-1}$ of $P$ by $e(u_t)$
(the suffix $v_{t-1}$-$\cdots$-$v_1$-$v_0$ of $P$ by $e(v_t)$, respectively).
\end{operation}

Similarly, one sees that finding an extension $e(u_t)$ (of order at most $k-1$, or otherwise an Add operation is applicable)
in the induced subgraph $G[V - V(\mcP)]$, for any vertex $u_t \in P \in \mcP$, can be done in $O(n^{k-1})$ time.
Therefore, determining whether or not a prefix or a suffix replacement operation is applicable, and if so then applying it,
can be done in $O(n^k)$ time.
Note that such an operation increases $|V(\mcP)|$ by at least $1$.

The third operation tries to use a prefix and a non-overlapping suffix of a path in $\mcP$ to grow them into two separate paths of order at least $k$.

\begin{operation}\label{op3}
For a path $P \in \mcP$ such that
\begin{itemize}
\parskip=0pt
\item[(i)]
	there are two indices $t$ and $j$ with $j \ge t+1$ and two vertex-disjoint extensions $e(u_t)$ and $e(u_j)$
	with $n(e(u_t)) \ge k - (t+1)$ and $n(e(u_j)) \ge k - (n(P) - j)$,
	the {\em DoubleRep($P$)} operation replaces $P$ by two new paths $P_1 = u_0$-$u_1$-$\cdots$-$u_t$-$e(u_t)$ and $P_2 = e(u_j)$-$u_j$-$\cdots$-$v_1$-$v_0$;
\item[(ii)]
	or there are two indices $t$ and $j$ and two vertex-disjoint extensions $e(u_t)$ and $e(v_j)$
	with $n(e(u_t)) \ge k - (t+1)$ and $n(e(v_j)) \ge k - (j+1)$,
	the {\em DoubleRep($P$)} operation replaces $P$ by two new paths $P_1 = u_0$-$u_1$-$\cdots$-$u_t$-$e(u_t)$ and $P_2 = e(v_j)$-$v_j$-$\cdots$-$v_1$-$v_0$.
\end{itemize}
\end{operation}

Note that finding an extension $e(u_t)$ (of order at most $t$, or otherwise a Rep operation is applicable)
can be limited to those indices $t \ge \frac {k-1}2$.
Furthermore, we only need to find an extension $e(u_t)$ of order at most $\frac {k-1}2$
(equal to $\frac {k-1}2$ only if $t = \frac {k-1}2$).
For the same reason, we only need to find an extension $e(v_j)$ of order at most $\frac {k-1}2$ for those indices $j \ge \frac {k-1}2$
(equal to $\frac {k-1}2$ only if $j = \frac {k-1}2$).
Since $k \le n(P) \le 2k-1$ for any $P \in \mcP$,
we only need to find an extension $e(u_j)$ of order at most $\frac {k-1}2$ too
(equal to $\frac {k-1}2$ only if $j = \frac {n(P) - 1}2$ and $n(P) = k$).
In summary, finding the two vertex-disjoint extensions $e(u_t)$ and $e(u_j)$, or $e(u_t)$ and $e(v_j)$, 
in the induced subgraph $G[V - V(\mcP)]$ can be done in $O(n^{k-1})$ time
(in $\Theta(n^{k-1})$ for at most $2$ pairs of $t$ and $j$).
It follows that determining whether or not a double replacement operation is applicable, and if so then applying it,
can be done in $O(n^k)$ time.
Also, such an operation increases $|V(\mcP)|$ by at least $1$
as the total number of vertices covered by the two new paths $P_1$ and $P_2$ is at least $2k$.
We summarize the above observations on the three operations into the following lemma.

\begin{lemma}\label{lemma02}
Given a collection $\mcP$ of vertex-disjoint paths of order in between $k$ and $2k-1$,
determining whether or not one of the three operations {\em Add}, {\em Rep} and {\em DoubleRep} is applicable, and if so then applying it,
can be done in $O(n^k)$ time.
Each operation increases $|V(\mcP)|$ by at least $1$.
\end{lemma}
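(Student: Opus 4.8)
The plan is to prove the two assertions—the $O(n^k)$ running-time bound and the ``increase by at least one'' property—by assembling, operation by operation, the per-operation analyses already sketched in the paragraphs preceding the statement. Since the lemma is a uniform summary over the three operations, I would organize the argument as three short cases, \textbf{Add}, \textbf{Rep}, and \textbf{DoubleRep}, and in each case establish both the running time and the coverage increase.

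For the running time, the common skeleton is the same in all three cases: the number of paths in $\mcP$ is at most $n/k = O(n)$ and each path has $O(1)$ vertices, so the number of ``anchor'' configurations at which an operation could be attempted—a single $k$-path to be added, a single index $t$ on a path, or a pair of indices $(t,j)$ on a path—is $O(n)$ (the path is the only unbounded choice; since $k$ is fixed, the relevant index values per path contribute only a constant factor). I would then bound the work done at each anchor. Here the key observation, already isolated in the text, is that every extension we ever need to search for has \emph{bounded order}: for \textbf{Rep} an extension of order at least $k$ would instead trigger \textbf{Add}, so only extensions of order at most $k-1$ matter; for \textbf{DoubleRep} an extension of order exceeding $\frac{k-1}2$ would instead trigger \textbf{Rep}, so only extensions of order at most $\frac{k-1}2$ matter. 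Enumerating all paths of order at most $m$ inside $G[V - V(\mcP)]$ that have a prescribed endpoint adjacent to a fixed vertex can be done by brute force over vertex sequences in $O(n^m)$ time; taking $m = k-1$ for \textbf{Add}/\textbf{Rep}, and for \textbf{DoubleRep} two vertex-disjoint extensions whose orders sum to at most $k-1$ (so $O(n^{(k-1)/2} \cdot n^{(k-1)/2}) = O(n^{k-1})$), each anchor costs $O(n^{k-1})$. Multiplying by the $O(n)$ anchors yields the claimed $O(n^k)$.

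For the coverage increase I would argue each case directly. \textbf{Add} adds $k \ge 1$ vertices. \textbf{Rep} replaces a prefix (or suffix) of $t$ vertices by an extension of order at least $t+1$ while leaving the remaining part of the path intact, so $|V(\mcP)|$ grows by at least one. \textbf{DoubleRep} replaces a single path $P$, whose order is at most $2k-1$ by the standing normalization on $\mcP$, by two new paths $P_1$ and $P_2$ each of order at least $k$; since $n(P_1) + n(P_2) \ge 2k > 2k-1 \ge n(P)$, the net gain is again at least one.

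The only genuinely delicate point is the \textbf{DoubleRep} running-time bookkeeping, and I expect that to be the main obstacle to make fully rigorous: one must verify that the two extensions can be searched for \emph{jointly} subject to vertex-disjointness without exceeding $O(n^{k-1})$, and that the a-priori restriction of both extension orders to at most $\frac{k-1}2$ is justified by the fact that a larger extension would make the higher-priority \textbf{Rep} (or \textbf{Add}) applicable. Once this order bound and the constancy of the number of relevant index pairs $(t,j)$ per path are pinned down, the disjoint-pair enumeration is a routine $O(n^{k-1})$ computation and the rest of the lemma follows from the per-operation facts assembled above.
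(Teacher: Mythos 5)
Your proposal is correct and follows essentially the same route as the paper, which likewise argues operation by operation that each candidate extension has bounded order (at most $k-1$ for Rep since a longer one would trigger Add, and at most $\frac{k-1}{2}$ for each of the two DoubleRep extensions since a longer one would trigger Rep), enumerates such extensions by brute force in $O(n^{k-1})$ time per anchor over $O(n)$ anchors, and derives the coverage increase from the same per-operation counts (Add gains $k$; Rep swaps a $t$-vertex prefix for at least $t+1$ vertices; DoubleRep replaces a path of order at most $2k-1$ by two paths of total order at least $2k$). The "delicate point" you flag about the joint disjoint-pair search is resolved in the paper exactly as you anticipate.
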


Given a graph $G = (V, E)$, our approximation algorithm for the \Max{k} problem, denoted as {\sc Approx1}, is iterative.
It starts with the empty collection $\mcP = \emptyset$;
in each iteration, it determines whether any one of the three operations Add, Rep and DoubleRep is applicable,
and if so then it applies the operation to update $\mcP$.
During the entire process, $\mcP$ is maintained to be a collection of vertex-disjoint paths of order in between $k$ and $2k-1$.
The algorithm terminates if none of the three operations is applicable for the current $\mcP$, and returns it as the solution.
A simple high level description of the algorithm {\sc Approx1} is depicted in Figure~\ref{fig01}.
From Lemma~\ref{lemma02}, we see that each operation improves the collection $\mcP$ to cover at least one more vertex.
Therefore, the overall running time of {\sc Approx1} is in $O(n^{k+1})$.

\begin{figure}[htb]
\begin{center}
\framebox{
\begin{minipage}{5.1in}
Algorithm {\sc Approx1}:\\
Input: A graph $G = (V, E)$;
\begin{itemize}
\parskip=0pt
\item[1.]
	initialize $\mcP = \emptyset$;
\item[2.]
	{\bf while} (one of the operations Add, Rep, DoubleRep is applicable)
	\begin{itemize}
	\parskip=0pt
	\item[2.1]
		apply the operation to update $\mcP$;
	\item[2.2]
		break any path of order $2k$ or above into two paths, one of which is a $k$-path;
	\end{itemize}
\item[3.]
	return the final $\mcP$.
\end{itemize}
\end{minipage}}
\end{center}
\caption{A high level description of the algorithm {\sc Approx1}.\label{fig01}}
\end{figure}

Below we fix $\mcP$ to denote the collection of paths returned by our algorithm {\sc Approx1}.
The next three lemmas summarize the structural properties of $\mcP$,
which are useful in the performance analysis.

\begin{lemma}\label{lemma03}
For any path $Q$ of order at least $k$ in the graph $G$, $V(Q) \cap V(\mcP) \ne \emptyset$.
\end{lemma}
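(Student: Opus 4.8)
The plan is to prove the contrapositive-style statement by contradiction, exploiting the termination condition of {\sc Approx1}. Recall that $\mcP$ is the collection returned by the algorithm, so when the algorithm halts none of the three operations Add, Rep, DoubleRep is applicable to $\mcP$. Suppose, for the sake of contradiction, that there is a path $Q$ in $G$ with $n(Q) \ge k$ and yet $V(Q) \cap V(\mcP) = \emptyset$. The whole point will be to manufacture from $Q$ an object to which the Add operation (Operation~\ref{op1}) applies, thereby contradicting the fact that the algorithm has terminated.

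First I would observe that since $Q$ is a path of order at least $k$, its vertices can be linearly ordered as $(w_1, w_2, \ldots, w_{n(Q)})$ with consecutive vertices adjacent in $Q$, and therefore the first $k$ of them, $(w_1, w_2, \ldots, w_k)$, together with the edges joining consecutive pairs, form a $k$-path $Q'$ that is a subgraph of $Q$. Because $V(Q') \subseteq V(Q)$ and $V(Q) \cap V(\mcP) = \emptyset$, we have $V(Q') \cap V(\mcP) = \emptyset$, i.e.\ every vertex of $Q'$ lies in $R = V - V(\mcP)$. Moreover, all $k-1$ edges of $Q'$ are edges of $G$ joining two vertices of $R$, so by the definition of the induced subgraph $G[R]$ these edges belong to $E(G[R])$. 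Hence $Q'$ is a genuine $k$-path inside the induced subgraph $G[V - V(\mcP)]$.

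At this point the Add operation applies directly: Operation~\ref{op1} is defined precisely for a $k$-path $P$ lying in $G[V - V(\mcP)]$, and here $Q'$ is such a $k$-path. Thus Add($Q'$) would be applicable to the current collection $\mcP$, which contradicts the assumption that {\sc Approx1} has terminated (at termination no operation, Add in particular, is applicable). This contradiction shows that no such disjoint path $Q$ can exist, so $V(Q) \cap V(\mcP) \ne \emptyset$ for every path $Q$ of order at least $k$.

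I do not expect any serious obstacle here; the argument is a one-step appeal to the termination invariant. The only point needing a moment of care is the extraction of a contiguous $k$-vertex sub-path of $Q$ and the verification that it sits inside the induced subgraph $G[V-V(\mcP)]$ as an honest path (both its vertex set and its edge set are retained because disjointness from $V(\mcP)$ is inherited by any subset of $V(Q)$). Once that bookkeeping is in place, the applicability of Add and the resulting contradiction are immediate.
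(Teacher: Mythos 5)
Your argument is correct and is exactly the paper's proof, merely spelled out in more detail: the paper also observes that if some path $Q$ of order at least $k$ were disjoint from $V(\mcP)$, an Add operation would be applicable, contradicting the termination condition of {\sc Approx1}. Your extra bookkeeping (extracting the first $k$ vertices of $Q$ as a $k$-path in $G[V-V(\mcP)]$) is a sound elaboration of the same one-step appeal.
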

\begin{proof}
The lemma holds due to the termination condition of the algorithm {\sc Approx1}, since otherwise an Add operation is applicable.
\end{proof}

\begin{lemma}\label{lemma04}
For any path $P \in \mcP$, $n(u_j) \le j$ and $n(v_j) \le j$ for any index $j$.
\end{lemma}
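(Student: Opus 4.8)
The plan is to establish both inequalities by invoking the termination condition of {\sc Approx1}: if some extension at $u_j$ (or $v_j$) were long enough, a {\sc Rep} operation would still be applicable, contradicting that $\mcP$ is the collection returned at termination. Since the head side and the tail side of $P$ are treated symmetrically by the two clauses of Operation~\ref{op2}, it suffices to prove $n(u_j) \le j$; the argument for $n(v_j) \le j$ is identical after swapping $u$ for $v$.

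First I would unwind the definition: $n(u_j)$ is the order of a longest extension at $u_j$, i.e.\ a longest path $e(u_j)$ in the induced subgraph $G[V - V(\mcP)]$ one of whose endpoints is adjacent to $u_j$ in $G$. Suppose, for contradiction, that $n(u_j) \ge j+1$ for some index $j$. Then, taking $t = j$ in Operation~\ref{op2}, there is an index $t$ and an extension $e(u_t)$ with $n(e(u_t)) \ge t+1$, which is precisely the precondition under which {\sc Rep}($P$) is applicable.

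Next I would verify that executing this {\sc Rep}($P$) genuinely improves $\mcP$ while keeping it feasible, so that its applicability really contradicts termination. The operation deletes the prefix $u_0$-$\cdots$-$u_{t-1}$, which consists of exactly $t$ vertices, and attaches $e(u_t)$ (of order at least $t+1$) to $u_t$. Because $e(u_t)$ lives in $G[V - V(\mcP)]$, it is vertex-disjoint from the retained suffix $u_t$-$u_{t+1}$-$\cdots$-$v_0$, so the result $u_0$-$\cdots$-$u_t$-$e(u_t)$ is a single simple path; its order is $n(P) - t + n(e(u_t)) \ge n(P) + 1 \ge k+1$, hence still at least $k$. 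By Lemma~\ref{lemma02} such an operation strictly increases $|V(\mcP)|$, contradicting the fact that {\sc Approx1} terminated on $\mcP$. Therefore $n(u_j) \le j$ for every index $j$, and the same reasoning applied via clause (the suffix version) of Operation~\ref{op2} yields $n(v_j) \le j$.

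I do not expect a substantive obstacle, since the statement is essentially a reformulation of the {\sc Rep} termination condition; the only point deserving a line of care is the boundary case $j = 0$, where the ``prefix'' $u_0$-$\cdots$-$u_{-1}$ is empty and {\sc Rep} merely prepends $e(u_0)$ to $u_0$. The order bookkeeping above still produces a path of order $n(P)+1 \ge k+1$, so the conclusion $n(u_0) \le 0$ (equivalently, no extension exists at the endpoint $u_0$) holds as well.
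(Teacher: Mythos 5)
Your proof is correct and takes essentially the same route as the paper, whose entire argument is that otherwise a Rep operation would be applicable, contradicting the termination condition of {\sc Approx1}; you simply spell out this contrapositive in full detail, including the feasibility and strict-improvement checks. (One cosmetic slip: after deleting the prefix $u_0$-$\cdots$-$u_{t-1}$, the resulting path is $e(u_t)$-$u_t$-$\cdots$-$v_0$ rather than $u_0$-$\cdots$-$u_t$-$e(u_t)$, but your order count $n(P) - t + n(e(u_t)) \ge n(P)+1$ is the correct one.)
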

\begin{proof}
The lemma holds due to the termination condition of the algorithm {\sc Approx1}, since otherwise a Rep operation is applicable.
\end{proof}

\begin{lemma}\label{lemma05}
Suppose there is a vertex $u_t$ on a path $P \in \mcP$ and an extension $e(u_t)$ with $n(e(u_t)) \ge k-t-1$.
Then,
\begin{itemize}
\parskip=0pt
\item[(i)]
	for any vertex $u_j$ with $j \ge t+1$, $j \le k-2$ and every extension $e(u_j)$ vertex-disjoint to $e(u_t)$ has order $n(e(u_j)) \le k-j-2$;
\item[(ii)]
	for any index $j$, every extension $e(v_j)$ vertex-disjoint to $e(u_t)$ has order $n(e(v_j)) \le k-j-2$.
\end{itemize}
\end{lemma}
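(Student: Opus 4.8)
The plan is to derive both parts by contradiction from the termination condition of {\sc Approx1}: since the algorithm stopped, none of Add, Rep, DoubleRep (Operation~\ref{op3}) is applicable to $\mcP$. The hypothesis already hands us an extension $e(u_t)$ with $n(e(u_t)) \ge k-t-1 = k-(t+1)$, which is precisely the lower bound that the first extension must satisfy in \emph{both} branches of Operation~\ref{op3}. So in each part it suffices to assume a second, vertex-disjoint extension that is one vertex longer than the claimed bound, and show that this second extension is long enough to complete a DoubleRep, contradicting termination.

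For part (i) I would suppose, for contradiction, that some extension $e(u_j)$ with $t+1 \le j \le k-2$, vertex-disjoint to $e(u_t)$, has $n(e(u_j)) \ge k-j-1$. The key step is to upgrade this to the threshold required by Operation~\ref{op3}(i), namely $n(e(u_j)) \ge k-(n(P)-j)$. Here I use that $u_j$ is a genuine vertex of $P$, so its index obeys $j \le \lceil \tfrac{n(P)}{2}\rceil - 1$, which in turn gives $2j \le n(P)-1$ for both parities of $n(P)$; hence $k-j-1 \ge k-(n(P)-j)$ and the hypothesized bound on $e(u_j)$ already meets the DoubleRep threshold. With $j \ge t+1$ guaranteeing that the prefix $u_0$-$\cdots$-$u_t$ and the suffix $u_j$-$\cdots$-$v_0$ are disjoint, and with $e(u_t), e(u_j)$ vertex-disjoint, all conditions of Operation~\ref{op3}(i) hold; the resulting paths $P_1 = u_0$-$\cdots$-$u_t$-$e(u_t)$ and $P_2 = e(u_j)$-$u_j$-$\cdots$-$v_0$ each have order at least $k$, so DoubleRep would apply — contradiction. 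Thus $n(e(u_j)) \le k-j-2$.

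Part (ii) is more direct, because branch (ii) of Operation~\ref{op3} imposes no order relation between $t$ and $j$ and only requires $n(e(v_j)) \ge k-(j+1) = k-j-1$. Assuming a vertex-disjoint extension $e(v_j)$ with $n(e(v_j)) \ge k-j-1$, the paths $P_1 = u_0$-$\cdots$-$u_t$-$e(u_t)$ and $P_2 = e(v_j)$-$v_j$-$\cdots$-$v_0$ are automatically vertex-disjoint (one consumes a $u$-prefix, the other a $v$-suffix, and these never overlap), and each has order at least $k$; so Operation~\ref{op3}(ii) applies, again contradicting termination. I should also note that the claim is vacuous unless $k-j-2 \ge 0$, which is automatic here since $v_j$ being a vertex of $P$ forces $j \le \lfloor \tfrac{n(P)}{2}\rfloor - 1 \le k-2$ (recall $n(P) \le 2k-1$), exactly as the bound $j \le k-2$ in part (i) keeps that part meaningful. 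The only real obstacle is the index arithmetic in part (i): I must verify carefully that the admissible range $j \le \lceil \tfrac{n(P)}{2}\rceil-1$ of a $u$-index yields $2j \le n(P)-1$ in both parities, since that single inequality is what converts the hypothesized lower bound into the DoubleRep(i) threshold; everything else reduces to confirming that the constructed $P_1$ and $P_2$ are vertex-disjoint paths of order at least $k$.
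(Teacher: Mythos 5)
Your proof is correct and follows essentially the same route as the paper's: both arguments rest on the non-applicability of DoubleRep at termination, and your key inequality $2j \le n(P)-1$ (equivalently $n(P) \ge 2j+1$) is exactly the step the paper uses to convert the Operation~\ref{op3}(i) threshold $k-(n(P)-j)$ into the stated bound $k-j-2$. The only cosmetic difference is that you argue by contradiction and read $j \le k-2$ as delimiting the range of $j$, whereas the paper states the contrapositive directly and derives $j \le k-2$ from $k-j-2 \ge 0$; the mathematical content is identical.
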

\begin{proof}
The lemma holds due to the termination condition of the algorithm {\sc Approx1}.

First, for any vertex $u_j$ with $j \ge t+1$, an extension $e(u_j)$ vertex-disjoint to $e(u_t)$ has order $n(e(u_j)) \le k - (n(P) - j) - 1$
since otherwise a DoubleRep operation is applicable.
Using the fact that $n(P) \ge 2j+1$, $n(e(u_j)) \le k - j - 2$.
Then by $k - j - 2 \ge 0$, we have $j \le k - 2$.

Next, similarly, for any index $j$, a vertex-disjoint extension $e(v_j)$ to $e(u_t)$ has order $n(e(v_j)) \le k - j - 2$
since otherwise a DoubleRep operation is applicable.
\end{proof}

We next examine the performance of the algorithm {\sc Approx1}.
We fix $\mcQ$ to denote an optimal collection of vertex-disjoint paths of order at least $k$ that covers the most vertices.
We apply an amortization scheme to {\em assign} the vertices of $V(\mcQ)$ to the vertices of $V(\mcP) \cap V(\mcQ)$.
We will show that, using the structural properties of $\mcP$ in Lemmas~\ref{lemma03}--\ref{lemma05},
the average number of vertices {\em received} by a vertex of $V(\mcP)$ is upper bounded by $\rho(k)$,
which is the approximation ratio of {\sc Approx1}.

In the amortization scheme, we assign the vertices of $V(\mcQ)$ to the vertices of $V(\mcP) \cap V(\mcQ)$ as follows:
Firstly, assign each vertex of $V(\mcQ) \cap V(\mcP)$ to itself.
Next, recall that $\mcQ - \mcP$ is the collection of sub-paths of the paths of $\mcQ$ after removing those vertices in $V(\mcQ) \cap V(\mcP)$.
By Lemma~\ref{lemma03}, each path $S$ of $\mcQ - \mcP$ is associated with one or two vertices in $V(\mcQ) \cap V(\mcP)$.
If the path $S$ is associated with only one vertex $v$ of $V(\mcQ) \cap V(\mcP)$,
then all the vertices on $S$ are assigned to the vertex $v$.
If the path $S$ is associated with two vertices $v_1$ and $v_2$ of $V(\mcQ) \cap V(\mcP)$,
then a half of the vertices on $S$ are assigned to each of the two vertices $v_1$ and $v_2$.
One sees that in the amortization scheme,
all the vertices of $V(\mcQ)$ are assigned to the vertices of $V(\mcP) \cap V(\mcQ)$;
conversely, each vertex of $V(\mcP) \cap V(\mcQ)$ receives itself, plus some fraction of or all the vertices on one or two paths of $\mcQ - \mcP$.
(We remark that the vertices of $V(\mcP) - V(\mcQ)$, if any, receive nothing.)

\begin{lemma}\label{lemma06}
For any vertex $u_j$ on a path $P \in \mcP$ with $j \le k-2$,
if $n(u_j) \le k - j - 2$, then $u_j$ receives at most $\frac 32 \min\{j, k-j-2\} + 1$ vertices.
\end{lemma}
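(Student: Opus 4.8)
The plan is to split the count of what $u_j$ receives into two parts: the vertex $u_j$ itself, which contributes exactly $1$, and the fractions it receives from the paths of $\mcQ - \mcP$ associated with it. By the remark following \cref{def1}, $u_j$ is associated with at most two such paths. First I would bound the order of each associated path: by \cref{lemma01}, any path $S$ of $\mcQ - \mcP$ associated with $u_j$ satisfies $n(S) \le n(u_j)$, and combining \cref{lemma04} with the hypothesis gives $n(u_j) \le \min\{j,\,k-j-2\}$. Writing $m = \min\{j,\,k-j-2\}$, every associated path has order at most $m$; under the amortization rule a path assigned entirely to $u_j$ contributes at most $m$, while a path shared with a second vertex of $V(\mcP)\cap V(\mcQ)$ contributes at most $\tfrac12 m$.

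The crux is to rule out the single configuration that would break the bound, namely $u_j$ being associated with two paths $S_1,S_2$ each of which is assigned \emph{entirely} to it. I would argue this cannot happen. If both $S_1$ and $S_2$ were assigned entirely to $u_j$, then each is associated with $u_j$ alone, so the endpoint of each $S_i$ other than the one adjacent to $u_j$ must be a genuine endpoint of its path in $\mcQ$. Since $u_j$ would then have to supply both of its $\mcQ$-neighbours to $S_1$ and $S_2$ (they are distinct, as otherwise $\mcQ$ would contain a cycle), the whole path of $\mcQ$ through $u_j$ is exactly $S_1$-$u_j$-$S_2$, of order $n(S_1)+1+n(S_2)$. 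Because every path of $\mcQ$ has order at least $k$, this forces $n(S_1)+n(S_2)\ge k-1$. On the other hand $n(S_1)+n(S_2)\le 2m\le j+(k-j-2)=k-2$, a contradiction. I expect this impossibility argument to be the main obstacle; the rest is bookkeeping.

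With the two-full configuration excluded, I would finish by a short case check over the at most two associated paths: at most one is assigned entirely to $u_j$ (contributing at most $m$), and every other associated path contributes at most $\tfrac12 m$, so the total received from associated paths is at most $m+\tfrac12 m=\tfrac32 m$. Adding the single vertex $u_j$ itself gives the claimed bound $\tfrac32 m + 1 = \tfrac32\min\{j,\,k-j-2\}+1$. It is worth noting that this argument relies only on \cref{lemma01}, \cref{lemma04}, and the order lower bound on the paths of $\mcQ$, and does not appear to require \cref{lemma05}.
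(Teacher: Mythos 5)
Your proof is correct and follows essentially the same route as the paper: bound the order of each associated path by $\min\{j,k-j-2\}$ via Lemmas~\ref{lemma01} and \ref{lemma04}, then rule out two fully-assigned associated paths by observing that $2\min\{j,k-j-2\}+1 \le k-1 < k$ contradicts the order lower bound on the path of $\mcQ$ through $u_j$. The paper states this contradiction more tersely (concluding directly that one of the two paths must be associated with a second vertex and hence is only half-assigned), but the argument is the same.
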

\begin{proof}
By Lemma~\ref{lemma04}, we have $n(u_j) \le j$.
Therefore, $n(u_j) \le \min\{j, k-j-2\}$.
By Lemma~\ref{lemma01}, any path of $\mcQ - \mcP$ associated with $u_j$ contains at most $\min\{j, k-j-2\}$ vertices.

If there is at most one path of $\mcQ - \mcP$ associated with $u_j$, then the lemma is proved.

Consider the remaining case where there are two paths of $\mcQ - \mcP$ associated with $u_j$.
Since $2\min\{j, k-j-2\} + 1 \le j + (k-j-2) + 1 = k - 1$,
while the path $Q \in \mcQ$ containing the vertex $u_j$ has order at least $k$,
we conclude that one of these two paths of $\mcQ - \mcP$ is associated with another vertex of $V(\mcQ) \cap V(\mcP)$.
It follows from the amortization scheme that $u_j$ receives at most $\frac 32 \min\{j, k-j-2\} + 1$ vertices.
\end{proof}

\begin{lemma}
\label{lemma07}
Suppose $s$ is an integer such that $\lfloor\frac k2\rfloor - 1 \le s \le k-2$. 
Then we have 
\[
\sum_{j=1}^s \min\{j, k - j - 2\} = ks + \frac 32 k - \frac 14 k^2 - \frac 52 s - \frac 12 s^2 - 2 - \frac 14 (k\bmod{2}),
\]
where $\bmod$ is the modulo operation.
\end{lemma}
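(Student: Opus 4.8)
The plan is to split the summand at the index where the two arguments of the minimum coincide. Setting $j = k-j-2$ gives $j = \frac{k-2}{2}$, so for small $j$ the term $j$ is the smaller and for large $j$ the term $k-j-2$ is the smaller. Concretely, set $a = \lfloor \frac k2\rfloor - 1$; I would first establish that $\min\{j, k-j-2\} = j$ for every $1 \le j \le a$ and $\min\{j, k-j-2\} = k-j-2$ for every $j \ge a+1$. This holds for both parities at once: when $k$ is even the crossover $\frac{k-2}2 = a$ is an integer, and when $k$ is odd it is the half-integer $a + \frac12$, so in either case $a$ is exactly the last index on which the increasing branch is selected. The hypothesis $\lfloor \frac k2\rfloor - 1 \le s$ is precisely what guarantees $a \le s$, so that the split
\[
\sum_{j=1}^s \min\{j, k-j-2\} = \sum_{j=1}^{a} j + \sum_{j=a+1}^{s} (k-j-2)
\]
is legitimate (the second sum being empty when $s = a$); the upper bound $s \le k-2$ plays no role in the identity itself and merely keeps every summand nonnegative, which is what the statement needs when it is invoked in \cref{lemma05}.

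Next I would evaluate the two pieces using the elementary formula $\sum_{j=1}^n j = \frac{n(n+1)}2$. The first piece is $\frac{a(a+1)}2$, and the second is $(s-a)(k-2) - \sum_{j=a+1}^s j = (s-a)(k-2) - \frac{s(s+1)}2 + \frac{a(a+1)}2$. Adding the two collapses the summations into the intermediate expression
\[
a(a+1) + (s-a)(k-2) - \frac{s(s+1)}2 ,
\]
a polynomial in $a$, $s$, and $k$. From here the remaining work is purely algebraic substitution and simplification.

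The main obstacle is the parity bookkeeping needed to reproduce the $-\frac14(k \bmod 2)$ correction term, and the clean way to handle it is to write $a = \lfloor \frac k2\rfloor - 1 = \frac{k - (k \bmod 2) - 2}{2}$ and substitute this single expression, so that both parities are handled simultaneously and the remainder $r := k \bmod 2$ is carried symbolically. Using $r^2 = r$ (valid since $r \in \{0,1\}$) to simplify $a(a+1)$ and $a(k-2)$, I would expand everything over the common denominator $4$; the cross terms in $kr$ cancel and the numerator collapses to $-k^2 + 6k + 4sk - 2s^2 - 10s - 8 - r$. Dividing by $4$ and restoring $r = k \bmod 2$ gives exactly $ks + \frac32 k - \frac14 k^2 - \frac52 s - \frac12 s^2 - 2 - \frac14(k\bmod 2)$, which completes the proof. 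A quick sanity check on a small case, say $k = 5$ and $s = 2$, where both sides equal $2$, confirms that the additive constants have been tracked correctly.
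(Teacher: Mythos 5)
Your proof is correct and follows essentially the same route as the paper's (very terse) proof: split the sum at the crossover index $\lfloor k/2\rfloor-1$, where $\min\{j,k-j-2\}$ switches from $j$ to $k-j-2$, and verify the closed form by direct computation. The only cosmetic difference is that you carry $r=k\bmod 2$ symbolically to treat both parities at once, whereas the paper suggests checking the even and odd cases separately; the algebra and the sanity check both come out right.
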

\begin{proof}
The formula can be directly validated by distinguishing the two cases where $k$ is even or odd,
and using the fact that $\min\{j, k - j - 2\} = j$ if and only if $j \le \lfloor \frac k2 \rfloor - 1$.
\end{proof}

\begin{theorem}\label{thm01}
The algorithm {\sc Approx1} is an $O(|V|^{k+1})$-time $\rho(k)$-approximation algorithm for the \Max{k} problem, where $k \ge 4$ and
\[
\rho(k) = \left\{
\begin{array}{ll}
\frac {3k+1}{2} - \frac 14 \sqrt{18 k^2 - 3}, 	&\mbox{ if $k$ is odd};\\
\frac {3k+1}{2} - \frac 14 \sqrt{18 k^2 - 21},	&\mbox{ if $k$ is even}.
\end{array}\right.
\]
In particular, {\sc Approx1} is a $2.4$-approximation algorithm for \Max{4} and the ratio $2.4$ is tight.
\end{theorem}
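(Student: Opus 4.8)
The plan is to bound the optimum $\OPT=|V(\mcQ)|$ against the output size $\SOL=|V(\mcP)|$ entirely through the amortization scheme set up above. Since every vertex of $V(\mcQ)$ is assigned to some vertex of $V(\mcP)\cap V(\mcQ)$, we have $|V(\mcQ)|=\sum_{P\in\mcP}R(P)$, where $R(P)=\sum_{v\in V(P)}(\text{vertices received by }v)$ (vertices of $V(P)-V(\mcQ)$ contributing $0$). Hence it suffices to prove a \emph{per-path} estimate $R(P)\le\rho(k)\,n(P)$ for every $P\in\mcP$; summing over $\mcP$ then gives $\OPT\le\rho(k)\,\SOL$, i.e.\ the claimed ratio. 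The time bound is immediate from \cref{lemma02}: each operation costs $O(n^{k})$ and strictly increases $|V(\mcP)|$, so there are at most $n$ iterations and the total is $O(n^{k+1})$.

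To estimate $R(P)$ I would classify the vertices of $P$. Call a vertex $u_j$ (or $v_j$) \emph{rich} if it is associated with a path of $\mcQ-\mcP$ of order exceeding $k-j-2$, so that \cref{lemma06} does not apply to it; every other vertex is \emph{poor} and, by \cref{lemma06}, receives at most $\frac32\min\{j,k-j-2\}+1$. The crucial structural observation is that $P$ carries \textbf{at most one} rich vertex: the paths of $\mcQ-\mcP$ are pairwise vertex-disjoint (being subpaths of the disjoint paths of $\mcQ$), so a long associated path at one vertex together with a long associated path at a vertex of larger index would verify the hypothesis yet contradict the conclusion of \cref{lemma05}(i), while \cref{lemma05}(ii) likewise forbids a long associated path on the opposite side. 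For the single rich vertex $u_s$ I would use only \cref{lemma04}: it is associated with at most two paths, each of order at most $n(u_s)\le s$, so it receives at most $2s+1$; moreover richness forces $n(u_s)\ge k-s-1$, whence $s\ge\frac{k-1}2$.

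The worst case is then a single rich vertex at the centre of an odd path. Indeed, placing the rich vertex at index $s$ requires $n(P)\ge 2s+1$, and since lengthening $P$ only adds poor vertices while enlarging the denominator, the extremal configuration is the $(2s+1)$-path with centre $u_s$; its two sides contribute symmetrically, and
\[
R(P)\ \le\ (2s+1)+2\sum_{j=0}^{s-1}\Big(\tfrac32\min\{j,k-j-2\}+1\Big)\ =\ 4s+1+3\sum_{j=1}^{s-1}\min\{j,k-j-2\}.
\]
Feeding the closed form of \cref{lemma07} (with upper index $s-1$) into this expression makes $R(P)$ a quadratic in $s$, so that $R(P)/n(P)=R(P)/(2s+1)$ is the ratio of a quadratic to a linear function. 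Maximising over the admissible range $\frac{k-1}2\le s\le k-1$ is a one-variable problem; at the stationary point the value of the ratio equals the ratio of the derivatives of numerator and denominator, and solving the resulting quadratic for $s$ introduces the square root, yielding $\frac{3k+1}2-\frac14\sqrt{18k^2-\cdots}$, with the two cases arising from the $\frac14(k\bmod 2)$ term of \cref{lemma07}. I expect this optimisation to be the main obstacle: the algebra is routine, but one must also verify that even-order paths, off-centre rich vertices, and the rich-vertex-free case all stay below this bound, and handle the case distinctions (parity of $k$, integrality of the optimal $s^{*}$, and the endpoints of the range) carefully.

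Finally, for \Max{4} I would substitute $k=4$: the admissible rich index is $s=2$, the extremal path is the $5$-path $u_0$-$u_1$-$u_2$-$v_1$-$v_0$ whose centre $u_2$ receives $2\cdot2+1=5$, while $u_1,v_1$ each receive $\frac32+1$ and $u_0,v_0$ each receive $1$, giving $R(P)\le 12$ and $R(P)/n(P)\le 12/5=2.4$. To show $2.4$ is tight I would construct an explicit graph on which {\sc Approx1} terminates with exactly this $5$-path while an optimal path cover covers $12$ vertices, checking that none of Add, Rep or DoubleRep applies to the computed $\mcP$ so that the algorithm indeed halts there.
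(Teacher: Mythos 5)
Your overall plan coincides with the paper's proof: the same amortization, the same per-path target $r(P)\le\rho(k)\,n(P)$, the same use of Lemma~\ref{lemma05} to isolate a single ``rich'' index $t$ charged $2t+1$ via Lemmas~\ref{lemma04} and~\ref{lemma01}, the same closed form from Lemma~\ref{lemma07}, and the same one-variable optimization producing $\frac{3k+1}{2}-\frac14\sqrt{18k^2-\cdots}$ (your expression $4s+1+3\sum_{j=1}^{s-1}\min\{j,k-j-2\}$ is algebraically identical to the paper's Eq.~(\ref{eq2}) after absorbing $\min\{s,k-s-2\}=k-s-2$). Three points, however, are genuine gaps rather than routine verification. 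First, ``$P$ carries at most one rich vertex'' is not literally true: a single path $S$ of $\mcQ-\mcP$ can be associated with two vertices of $P$ (both of its endpoints adjacent to $P$ in $\mcQ$), and Lemma~\ref{lemma05} only controls extensions \emph{vertex-disjoint} from $e(u_t)$, so it does not forbid the second vertex from also exceeding the $k-j-2$ threshold via $S$ itself. The paper handles this by charging all of $e(u_t)$ to $u_t$ and nothing to the other vertex (an admitted overestimate of $r(P)$); your accounting needs an equivalent step. Second, your monotonicity heuristic --- lengthening $P$ only adds poor vertices, hence cannot hurt the ratio --- is false: a poor vertex at index near $k/2$ receives about $\frac34 k$ vertices, which exceeds $\rho(k)\approx 0.44k$, and indeed the ratio is maximized at the \emph{interior} order $n(P)\approx\sqrt2\,k$. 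You recover by optimizing over $s$, but the claim that the centered-odd configuration dominates even-order paths and off-centre rich vertices is precisely what has to be computed (the paper's Cases 1--3, with $t\le s-1$ in the even case), not assumed.

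Third, the tightness of $2.4$ requires an explicit witness, and the one you sketch --- a single $5$-path against a $12$-vertex optimum --- cannot be realized. The $\frac52$ received by $u_1$ consists of itself, one whole order-$1$ path, and \emph{half} of another order-$1$ path of $\mcQ-\mcP$; the other half must be assigned to a second vertex of $V(\mcP)\cap V(\mcQ)$, so at least two paths of $\mcP$ are needed to make the fractional charges land somewhere. The paper's construction is a $24$-vertex graph on which {\sc Approx1} can terminate with two $5$-paths ($10$ covered vertices) while the optimum covers all $24$, and one must verify that none of Add, Rep, DoubleRep applies to that output.
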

\begin{proof}
Recall that all the vertices of $V(\mcQ)$ are assigned to the vertices of $V(\mcQ) \cap V(\mcP)$, through our amortization scheme.
Below we estimate for any path $P \in \mcP$ the total number of vertices received by the vertices of $V(P) \cap V(\mcQ)$, denoted as $r(P)$,
and we will show that $\frac {r(P)}{n(P)} \le \rho(k)$.

We fix a path $P \in \mcP$ for discussion.
If it exists, we let $t$ denote the smallest index $j$
such that the vertex $u_j$ on the path $P$ is associated with a path $e(u_j)$ of $\mcQ - \mcP$ with order $n(e(u_j)) \ge k-j-1$.
Note that if necessary we may rename the vertices on $P$,
so that the non-existence of $t$ implies any path of $\mcQ - \mcP$ associated with the vertex $u_j$ or $v_j$ has order at most $k-j-2$, for any index $j$.
Furthermore, by Lemma~\ref{lemma05}, if $t$ exists,
then any path of $\mcQ - \mcP$, except $e(u_t)$, associated with the vertex $u_j$ or $v_j$ has order at most $k-j-2$, for any index $j \ne t$.
We remark that $e(u_t)$ could be associated with another vertex on the path $P$.
When $n(P) = 2k-1$, $t$ exists and $t = k-1$.

We distinguish three cases for $n(P)$ based on its parity and on whether it reaches the maximum value $2k-1$.

Case 1. $n(P) = 2s + 1$ where $\frac {k-1}2 \le s \le k-2$.

Since $s \ge \frac {k-1}2$, we have $\min\{s, k-s-2\} = k-s-2$.
If the index $t$ does not exist, that is, any path of $\mcQ - \mcP$ associated with the vertex $u_j$ or $v_j$ has order at most $k-j-2$, for any index $j$,
then by Lemma~\ref{lemma06} each of the vertices $u_j$ and $v_j$ receives at most $\frac 32 \min\{j, k-j-2\} + 1$ vertices.
Hence we have
\begin{eqnarray}\label{eq1}
r(P) &\le& 2\sum_{j=0}^{s-1} \left(\frac 32 \min\{j, k-j-2\} + 1\right) + \frac 32 \min\{s, k-s-2\} + 1\nonumber\\
    &=& 3 \sum_{j=0}^s \min\{j, k-j-2\} - \frac 32 \min\{s, k-s-2\} + (2s + 1)\nonumber\\
    &\le& 3 \sum_{j=1}^s \min\{j, k-j-2\} - \frac 32 k + \frac 72 s + 4.
\end{eqnarray}

If the index $t$ exists, then by Lemmas~\ref{lemma04} and \ref{lemma01}, $n(e(u_t)) \le t$ and thus the vertex $u_t$ receives at most $2t+1$ vertices.
Note that if $e(u_t)$ is associated with another vertex on the path $P$,
then we count all the vertices of $e(u_t)$ towards $u_t$ but count none towards the other vertex (that is, we could overestimate $r(P)$).
It follows from Lemma~\ref{lemma06}, the above Eq.~(\ref{eq1}), $t \le s$, and $s \ge \frac {k-1}2$ that
\begin{eqnarray}\label{eq2}
r(P) &\le& 3 \sum_{j=1}^s \min\{j, k-j-2\} - \frac 32 k + \frac 72 s + 4 - \left(\frac 32\min\{t, k-t-2\} + 1\right) + (2t + 1)\nonumber\\
    &=& 3 \sum_{j=1}^s \min\{j, k-j-2\} - \frac 32 k + \frac 72 s + 4 + \left(2t - \frac 32\min\{t, k-t-2\}\right)\nonumber\\
    &=& 3 \sum_{j=1}^s \min\{j, k-j-2\} - \frac 32 k + \frac 72 s + 4 + \max\left\{\frac 12 t, \frac 72 t - \frac 32 k + 3\right\}\nonumber\\
    &\le& 3 \sum_{j=1}^s \min\{j, k-j-2\} - \frac 32 k + \frac 72 s + 4 + \max\left\{\frac 12 s, \frac 72 s - \frac 32 k + 3\right\}\nonumber\\
    &=& 3 \sum_{j=1}^s \min\{j, k-j-2\} - \frac 32 k + \frac 72 s + 4 + \left(\frac 72 s - \frac 32 k + 3\right)\nonumber\\
    &=& 3 \sum_{j=1}^s \min\{j, k-j-2\} - 3k + 7s + 7.
\end{eqnarray}

Combining Eqs.~(\ref{eq1}, \ref{eq2}), and by Lemma~\ref{lemma07}, in Case 1 we always have
\[
r(P) \le 3 \sum_{j=1}^s \min\{j, k-j-2\} - 3k + 7s + 7 = 3ks + \frac 32 k - \frac 34 k^2 - \frac 12 s - \frac 32 s^2 + 1 - \frac 34 (k\bmod{2}).
\]
Therefore, using $n(P) = 2s+1$ we have
\[
\frac {r(P)}{n(P)} \le 
\left\{
\begin{array}{ll}
\frac {3k+1}2 - \frac 38 (2s+1) - \frac {6k^2-1}{8(2s+1)} \le \frac {3k+1}2 - \frac 14 \sqrt{18k^2-3}, 	&\mbox{ if $k$ is odd};\\
\frac {3k+1}2 - \frac 38 (2s+1) - \frac {6k^2-7}{8(2s+1)} \le \frac {3k+1}2 - \frac 14 \sqrt{18k^2-21}, &\mbox{ if $k$ is even},
\end{array}\right.
\]
where the upper bound $\rho(k)$ is achieved when $n(P) \approx \sqrt{2} k$.

Case 2. $n(P) = 2k - 1$.

In this case, $t = k-1$ since $n(e(u_{k-1})) \ge 0$.
By Lemmas~\ref{lemma04} and \ref{lemma01}, the vertex $u_{k-1}$ receives at most $2k-1$ vertices.
It follows from Lemmas~\ref{lemma06} and \ref{lemma07} that
\begin{eqnarray*}
r(P) &\le& 2 \sum_{j=0}^{k-2} \left(\frac 32 \min\{j, k-j-2\} + 1\right) + (2k-1)\\
	&=& 3 \sum_{j=1}^{k-2} \min\{j, k-j-2\} + 4k - 3\\
	&=& \frac 34 k^2 + k - \frac 34 (k\bmod{2}).
\end{eqnarray*}  
Using $n(P) = 2k-1$, one can check that $\frac {r(P)}{n(P)} \le \frac 38 k + \frac {11}{14} < \rho(k)$
(that is, the ratio is strictly less than $\rho(k)$ for any $k \ge 4$).

Case 3. $n(P) = 2s$ where $\frac k2 \le s \le k-1$. 

Similar to Case 1, if the index $t$ does not exist, then by Lemma~\ref{lemma06} we have 
\begin{equation}\label{eq3}
r(P) \le 2\sum_{j=0}^{s-1} \left(\frac 32 \min\{j, k-j-2\} + 1\right)
	= 3 \sum_{j=0}^{s-1} \min\{j, k-j-2\} + 2s.
\end{equation} 

If the index $t$ exists, then by Lemmas~\ref{lemma04} and \ref{lemma01}, the vertex $u_t$ receives at most $2t+1$ vertices.
Again, note that if $e(u_t)$ is associated with another vertex on the path $P$,
then we count all the vertices of $e(u_t)$ towards $u_t$ but none to the other vertex.
Similarly as how we derive Eq.~(\ref{eq2}), it follows from Lemma~\ref{lemma06}, the above Eq.~(\ref{eq3}), $t \le s-1$, and $s - 1 \ge \frac k2 - 1$ that
\begin{eqnarray}\label{eq4}
r(P) &\le& 3 \sum_{j=1}^{s-1} \min\{j, k-j-2\} + 2s - \left(\frac 32\min\{t, k-t-2\} + 1\right) + (2t + 1)\nonumber\\
    &=& 3 \sum_{j=1}^{s-1} \min\{j, k-j-2\} + 2s + \left(2t - \frac 32\min\{t, k-t-2\}\right)\nonumber\\
    &=& 3 \sum_{j=1}^{s-1} \min\{j, k-j-2\} + 2s + \max\left\{\frac 12 t, \frac 72 t - \frac 32 k + 3\right\}\nonumber\\
    &\le& 3 \sum_{j=1}^{s-1} \min\{j, k-j-2\} + 2s + \max\left\{\frac 12 (s-1), \frac 72 (s-1) - \frac 32 k + 3\right\}\nonumber\\
    &=& 3 \sum_{j=1}^{s-1} \min\{j, k-j-2\} + 2s + \left(\frac 72 (s-1) - \frac 32 k + 3\right)\nonumber\\
    &=& 3 \sum_{j=1}^{s-1} \min\{j, k-j-2\} - \frac 32 k + \frac {11}2 s - \frac 12.
\end{eqnarray}

Combining Eqs.~(\ref{eq3}, \ref{eq4}), and by Lemma~\ref{lemma07} (using $\frac k2 - 1 \le s-1 \le k-2$), we always have
\[
r(P) \le 3 \sum_{j=1}^{s-1} \min\{j, k-j-2\} - \frac 32 k + \frac {11}2 s - \frac 12
	= 3ks - \frac 34 k^2 + s - \frac 32 s^2 - \frac 12 - \frac 34 (k\bmod{2}).
\]
Therefore, using $n(P) = 2s$ we have
\[
\frac {r(P)}{n(P)} \le 
\left\{
\begin{array}{ll}
\frac {3k+1}2 - \frac 34 s - \frac {3k^2+5}{8s} \le \frac {3k+1}2 - \frac 14 \sqrt{18k^2+30}, &\mbox{ if $k$ is odd};\\
\frac {3k+1}2 - \frac 34 s - \frac {3k^2+2}{8s} \le \frac {3k+1}2 - \frac 14 \sqrt{18k^2+12}, &\mbox{ if $k$ is even},
\end{array}\right.
\]
where the upper bound is achieved when $n(P) \approx \sqrt{2} k$.
Since this upper bound is strictly less than $\rho(k)$, we have $\frac {r(P)}{n(P)} < \rho(k)$ for any $k \ge 4$.

The above three cases on $n(P)$ together prove that the worst-case performance ratio of the algorithm {\sc Approx1} is at most $\rho(k)$,
for any $k \ge 4$.
Note that $\rho(k) \le \frac {3k+1}2 - \frac 14 \sqrt{18k^2 - 21} \le 0.4394 k  + 0.6576$.

When $k = 4$, that is, for the \Max{4} problem,
one can check that the largest ratio $\frac {r(P)}{n(P)}$ is achieved in Case 1 where $n(P) = 5$ and
$\frac {r(P)}{n(P)} = \frac {13}2 - \frac 38 \times 5 - \frac {89}{8 \times 5} = \frac {12}5$.
In other words, {\sc Approx1} is a $2.4$-approximation algorithm for \Max{4}.

\begin{figure}[ht]
\centering
  \setlength{\unitlength}{1bp}%
  \begin{picture}(192.03, 176.64)(0,0)
  \put(0,0){\includegraphics{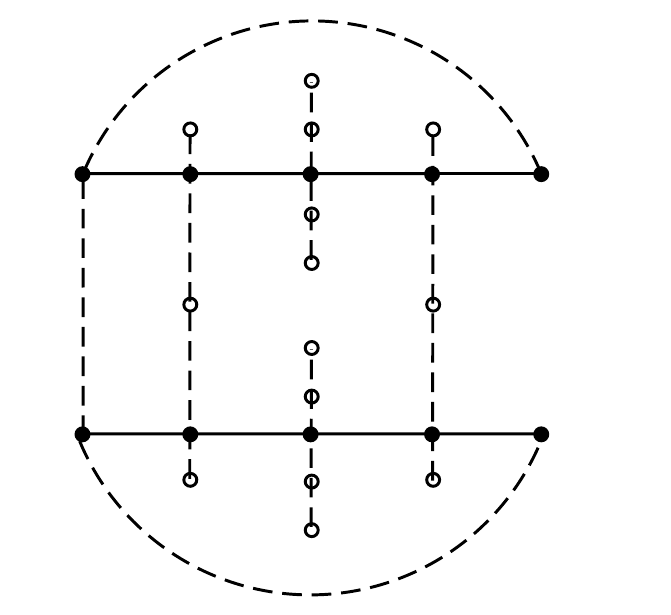}}
  \put(5.67,131.37){\fontsize{11.38}{13.66}\selectfont $u_0$}
  \put(56.62,131.37){\fontsize{11.38}{13.66}\selectfont $u_1$}
  \put(126.56,131.37){\fontsize{11.38}{13.66}\selectfont $u_3$}
  \put(5.67,53.18){\fontsize{11.38}{13.66}\selectfont $v_0$}
  \put(56.62,56.43){\fontsize{11.38}{13.66}\selectfont $v_1$}
  \put(126.56,56.43){\fontsize{11.38}{13.66}\selectfont $v_3$}
  \put(39.13,84.67){\fontsize{11.38}{13.66}\selectfont $w_1$}
  \put(40.64,144.61){\fontsize{11.38}{13.66}\selectfont $w_0$}
  \put(40.64,33.45){\fontsize{11.38}{13.66}\selectfont $w_2$}
  \put(158.02,131.37){\fontsize{11.38}{13.66}\selectfont $u_4$}
  \put(158.02,56.43){\fontsize{11.38}{13.66}\selectfont $v_4$}
  \put(91.59,131.37){\fontsize{11.38}{13.66}\selectfont $u_2$}
  \put(91.59,56.43){\fontsize{11.38}{13.66}\selectfont $v_2$}
  \put(75.60,158.60){\fontsize{11.38}{13.66}\selectfont $x_0$}
  \put(75.60,144.61){\fontsize{11.38}{13.66}\selectfont $x_1$}
  \put(75.60,113.13){\fontsize{11.38}{13.66}\selectfont $x_2$}
  \put(75.60,99.14){\fontsize{11.38}{13.66}\selectfont $x_3$}
  \put(75.60,81.65){\fontsize{11.38}{13.66}\selectfont $x_4$}
  \put(75.60,67.66){\fontsize{11.38}{13.66}\selectfont $x_5$}
  \put(75.60,36.19){\fontsize{11.38}{13.66}\selectfont $x_6$}
  \put(75.60,22.20){\fontsize{11.38}{13.66}\selectfont $x_7$}
  \put(130.07,84.67){\fontsize{11.38}{13.66}\selectfont $w_4$}
  \put(119.57,144.12){\fontsize{11.38}{13.66}\selectfont $w_3$}
  \put(126.57,32.20){\fontsize{11.38}{13.66}\selectfont $w_5$}
  \end{picture}%
\caption{A graph of order $24$ to show that the performance ratio $2.4$ of the algorithm {\sc Approx1} for \Max{4} is tight.
	All the edges in the graph are shown, either solid or dashed.
	The $10$ filled vertices are covered by the collection of two $5$-paths computed by {\sc Approx1},
	and the edges on these paths are shown solid;
	the edges on an optimal collection of four $5$-paths and a $4$-path, which covers all the vertices, are shown dashed.\label{fig02}}
\end{figure}

Furthermore, the graph displayed in Figure~\ref{fig02} contains $24$ vertices.
An optimal solution covers all the vertices, and its paths are
$u_4$-$u_0$-$v_0$-$v_4$,
$w_0$-$u_1$-$w_1$-$v_1$-$w_2$,
$x_0$-$x_1$-$u_2$-$x_2$-$x_3$,
$x_4$-$x_5$-$v_2$-$x_6$-$x_7$, and
$w_3$-$u_3$-$w_4$-$v_3$-$w_5$.
Assuming the algorithm {\sc Approx1} adds two $4$-paths
$u_0$-$u_1$-$u_2$-$u_3$ and
$v_0$-$v_1$-$v_2$-$v_3$, and then extends them to
$u_0$-$u_1$-$u_2$-$u_3$-$u_4$ and
$v_0$-$v_1$-$v_2$-$v_3$-$v_4$, respectively.
Then none of the three operations can be applied to improve the solution, which covers a total of $10$ vertices only.
This shows that the worst-case performance ratio $2.4$ of the algorithm {\sc Approx1} for \Max{4} is tight.
\end{proof}

\section{A $2$-approximation algorithm for \Max{4}}
One sees that the algorithm {\sc Approx1} is an $O(n^5)$-time tight $2.4$-approximation algorithm for the \Max{4} problem,
which improves the previous best $O(n^5)$-time $4$-approximation algorithm proposed in \cite{KLM22}
and $O(n^7)$-time $(4 - \frac 1{63,700,992} + \epsilon)$-approximation algorithm implied by \cite{Neu21}.
In the amortized analysis for {\sc Approx1},
when a path of $\mcQ - \mcP$ is associated with two vertices of $V(\mcQ) \cap V(\mcP)$,
one half of the vertices on this path is assigned to each of these two vertices.
We will show that when $k = 4$, that is, for the \Max{4} problem,
the assignment can be done slightly better by observing where these two vertices are on the paths of $\mcP$ and then assigning vertices accordingly.
To this purpose, we will need to refine the algorithm using two more local improvement operations,
besides the three Add, Rep and DoubleRep operations.
We denote our algorithm for the \Max{4} problem as {\sc Approx2}.

We again use $\mcP$ to denote the path collection computed by {\sc Approx2}.
Since {\sc Approx2} employs the three Add, Rep and DoubleRep operations, all of which are not applicable at termination,
the structural properties stated in Lemmas~\ref{lemma03}--\ref{lemma05} continue to hold, and we summarize them specifically using $k = 4$.

\begin{lemma}\label{lemma08}
For any path $P \in \mcP$,
\begin{itemize}
\parskip=0pt
\item[(1)]
	$4 \le n(P) \le 7$;
\item[(2)]
	$n(u_j) \le j$ and $n(v_j) \le j$, for any valid index $j$;
\item[(3)]
	if $n(P) = 6$, $n(u_2) > 0$ and $n(v_2) > 0$, then both $u_2$ and $v_2$ are adjacent to a unique vertex in $V - V(\mcP)$,
	and hence they have common extensions;
\item[(4)]
	if $n(P) = 7$, then $n(u_2) = n(v_2) = 0$.
\end{itemize}
\end{lemma}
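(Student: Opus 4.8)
The plan is to treat the four parts in increasing order of novelty. Parts (1) and (2) are immediate: the order bound $4 \le n(P) \le 7$ is exactly the invariant $k \le n(P) \le 2k-1$ with $k = 4$ that the algorithm maintains (Step 2.2 of {\sc Approx1} breaks off a $k$-path from any path of order $2k$ or above), and $n(u_j) \le j$, $n(v_j) \le j$ is Lemma~\ref{lemma04} specialized to $k = 4$. So I would dispatch these in a line or two and spend the real effort on (3) and (4), both of which I would prove by contradiction against the termination of {\sc Approx2}, i.e.\ against the inapplicability of the {\em DoubleRep} operation (Operation~\ref{op3}).

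For part (4), write the $7$-path as $u_0$-$u_1$-$u_2$-$u_3$-$v_2$-$v_1$-$v_0$ and suppose $n(u_2) \ge 1$. The key observation is that a {\em DoubleRep} split may use an {\em empty} extension on a side whose surviving segment is already a $k$-path; concretely, apply Operation~\ref{op3}(i) with $t = 2$ and $j = 3$. The length requirements become $n(e(u_2)) \ge k - (t+1) = 1$, which holds by assumption, and $n(e(u_3)) \ge k - (n(P) - j) = 4 - 4 = 0$, which is satisfied by the empty extension; the two extensions are trivially vertex-disjoint. This would replace $P$ by $P_1 = u_0$-$u_1$-$u_2$-$e(u_2)$ (order $\ge 4$) and $P_2 = u_3$-$v_2$-$v_1$-$v_0$ (order $4$), contradicting termination, so $n(u_2) = 0$. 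The same argument with the two ends swapped, namely Operation~\ref{op3}(ii) with $t = 3$ and $j = 2$, gives $n(v_2) = 0$.

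For part (3), write the $6$-path as $u_0$-$u_1$-$u_2$-$v_2$-$v_1$-$v_0$ and assume $n(u_2) > 0$ and $n(v_2) > 0$. Here both surviving segments $u_0$-$u_1$-$u_2$ and $v_2$-$v_1$-$v_0$ have order $3$, so each needs an extension of order at least $k - 3 = 1$ to reach length $k$; this is exactly Operation~\ref{op3}(ii) with $t = j = 2$. The heart of the argument is to rule out a vertex-disjoint pair of such extensions. Since an extension of order $1$ is just a single neighbor in $R = V - V(\mcP)$, I would argue: if $u_2$ had an $R$-neighbor $a$ and $v_2$ had an $R$-neighbor $b$ with $a \ne b$, then the single-vertex extensions at $a$ and at $b$ are vertex-disjoint, {\em DoubleRep} applies, and we reach a contradiction. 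Hence every $R$-neighbor of $u_2$ equals every $R$-neighbor of $v_2$; since both neighbor sets are nonempty, a routine set-theoretic step forces each to be the same singleton $\{w\}$. Finally, because the only way to extend either $u_2$ or $v_2$ into $R$ is through $w$, every extension of $u_2$ and every extension of $v_2$ begins at $w$, which is precisely the claimed common-extension property.

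The part I expect to require the most care is the passage in (3) from ``no vertex-disjoint pair of order-$1$ extensions'' to ``a single shared neighbor $w$'': one has to be explicit that it suffices to test single-vertex extensions (longer extensions only make disjointness harder to violate) and that pairwise coincidence of two nonempty neighbor sets collapses both to one common vertex. A secondary point to get right, in part (4), is the bookkeeping of the {\em DoubleRep} index conditions, in particular that the bound $n(e(u_j)) \ge k - (n(P) - j)$ is met by the empty extension exactly because the retained segment $u_3$-$v_2$-$v_1$-$v_0$ already has order $k$; this mirrors the remark in the proof of Theorem~\ref{thm01} that the index $t = k-1$ always exists when $n(P) = 2k-1$.
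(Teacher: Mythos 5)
Your proof is correct and follows essentially the same route as the paper, which states Lemma~\ref{lemma08} without a separate proof as the $k=4$ specialization of Lemmas~\ref{lemma03}--\ref{lemma05} (themselves consequences of the inapplicability of Add, Rep and DoubleRep at termination); your direct appeals to Operation~\ref{op3} for parts (3) and (4) simply unroll the argument behind Lemma~\ref{lemma05}, including the paper's own convention (used in Case~2 of Theorem~\ref{thm01}) that an order-$0$ extension satisfies the bound $n(e(u_j)) \ge k-(n(P)-j)$ when the retained segment already has order $k$.
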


We also fix $\mcQ$ to denote an optimal collection of vertex-disjoint paths of order at least $4$ that covers the most vertices.
Using Lemma~\ref{lemma08}, we may attach a longest possible extension in $\mcQ - \mcP$ to every vertex of a path of $\mcP$,
giving rise to the {\em worst cases} illustrated in Figure~\ref{fig03}, with respect to the order of the path.
Since a vertex of a path $P \in \mcP$ can be associated with up to two paths in $\mcQ - \mcP$,
the worst-case performance ratio of the algorithm {\sc Approx2} is at most $\max\{\frac {17}7, \frac {14}6, \frac {13}5, \frac 84\} = 2.6$.
Our next two local improvement operations are designed to deal with three of the four worst cases where the path orders are $5, 6$ and $7$.
Afterwards, we will show by an amortization scheme that the average number of vertices assigned to a vertex of $V(\mcP)$ is at most $2$.

\begin{figure}[ht]
\centering
  \setlength{\unitlength}{1bp}%
  \begin{picture}(284.22, 130.41)(0,0)
  \put(0,0){\includegraphics{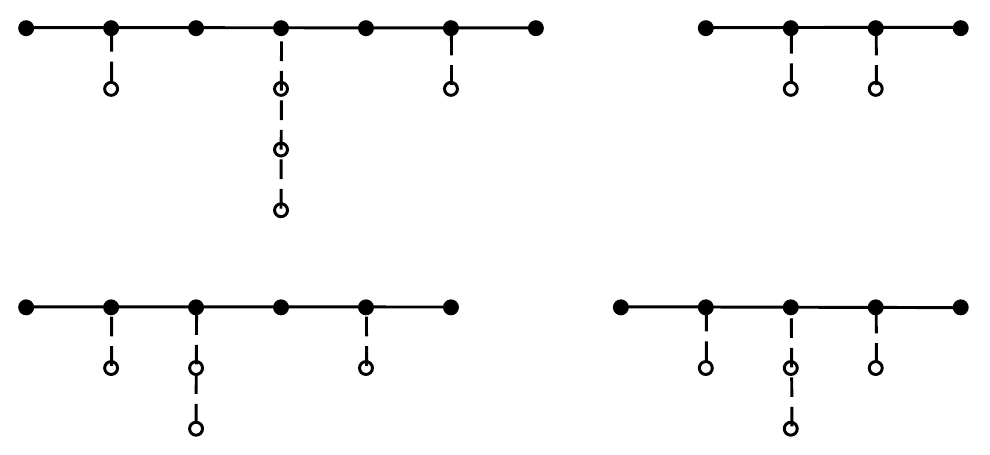}}
  \end{picture}%
\caption{The worst case of a path $P \in \mcP$ with respect to its order,
	where a vertex on the path $P$ is attached with a longest possible extension in $\mcQ - \mcP$, with its edges dashed.
	The edges on the path $P$ are solid and the vertices are shown filled, while the vertices on the extensions are unfilled.\label{fig03}}
\end{figure}

Since the average number of vertices assigned to a vertex on a $4$-path in $\mcP$ is already at most $2$,
the first operation is employed to construct more $4$-paths in $V(\mcP)$, whenever possible.

\begin{operation}\label{op4}
For any two paths $P$ and $P'$ in $\mcP$ of order at least $5$ such that
their vertices are covered exactly by a set of paths in $G$ of order at least $4$ and of which at least one path has order $4$,
the {\em Re-cover($P, P'$)} operation replaces $P$ and $P'$ by this set of paths.
\end{operation}

In other words, the Re-cover($P, P'$) operation removes the two paths $P$ and $P'$ from $\mcP$,
and then uses the set of paths to re-cover the same vertices.
Since there are $O(n^2)$ possible pairs of paths $P$ and $P'$ in $\mcP$,
and by $|V(P) \cup V(P')| \le 14$ the existence of a set of paths re-covering $V(P) \cup V(P')$ can be checked in $O(1)$ time,
we conclude that determining whether or not a Re-cover operation is applicable, and if so then applying it,
can be done in $O(n^2)$ time.
Note that such an operation does not change $|V(\mcP)|$, but it increases the number of $4$-paths by at least $1$.
For example, if $n(P) = 5$ and $n(P') = 7$, and we use $u'_j$'s and $v'_j$'s to label the vertices on $P'$,
then a Re-cover operation is applicable when $u_0$ is adjacent to any one of $u'_0, u'_2, v'_2, v'_0$ (resulting in three $4$-paths);
or if $n(P) = n(P') = 6$, then a Re-cover operation is applicable when $u_0$ is adjacent to any one of $u'_0, u'_1, v'_1, v'_0$
(resulting in three $4$-paths).

\begin{operation}\label{op5}
For a path $P \in \mcP$ such that there is an index $t \in \{2, 3\}$ and an extension $e(u_t)$ with $n(e(u_t)) = t$,
\begin{itemize}
\parskip=0pt
\item[(i)]
	if replacing $P$ by the path $e(u_t)$-$u_t$-$\cdots$-$v_1$-$v_0$ enables a Rep operation,
	then the {\em Look-ahead($P$)} operation first replaces $P$ by $e(u_t)$-$u_t$-$\cdots$-$v_1$-$v_0$ and next executes the Rep operation;
\item[(ii)]
	if $n(P) = 6$ and one of $v_0$ and $v_2$ is adjacent to a vertex $w$,
	such that $w$ is on $e(u_2)$ or on $u_0$-$u_1$ or on another path of $\mcP$ but at distance at most $1$ from one end,
	then the {\em Look-ahead($P$)} operation first replaces $P$ by the path $u_0$-$u_1$-$u_2$-$e(u_2)$ and
	next uses $v_0$-$v_1$-$v_2$ as an extension $e(w)$ to execute a Rep operation.
\end{itemize}
\end{operation}

In some sense, the Look-ahead($P$) operation looks one step ahead to see
whether or not using the extension $e(u_t)$ in various ways would help cover more vertices.
Recall that we can rename the vertices on a path of $\mcP$, if necessary,
and thus the above definition of a Look-ahead operation applies to the vertex $v_2$ symmetrically, if there is an extension $e(v_2)$ with $n(e(v_2)) = 2$.
Also, when $u_t$ is the center vertex of the path $P$ (i.e., $n(P) = 5, 7$),
one should also examine replacing $P$ by the path $u_0$-$u_1$-$\cdots$-$u_t$-$e(u_t)$.

When the first case of a Look-ahead operation applies, its internal Rep operation must involve at least two of the $t$ vertices $u_0, u_1, \ldots, u_{t-1}$%
\footnote{First, at least one of these $t$ vertices $u_0, u_1, \ldots, u_{t-1}$, say $u$, should be on the extension used in the internal Rep operation;
then the extension extends to the/a vertex adjacent to $u$.}
because otherwise an Add or a Rep operation would be applicable before this Look-ahead operation.
Since there are $O(n^3)$ possible extensions at the vertex $u_t$,
it follows from Lemma~\ref{lemma02} that determining whether or not the first case of a Look-ahead operation is applicable, and if so then applying it,
can be done in $O(n^6)$ time.
Note that such an operation increases $|V(\mcP)|$ by at least $1$.

When the second case of a Look-ahead operation applies, we see that after the replacement (which reduces $|V(\mcP)|$ by $1$),
the vertex $w$ is always on a path of $\mcP$ at distance at most $1$ from one end;
and thus the succeeding Rep operation increases $|V(\mcP)|$ by at least $2$.
The net effect is that such a Look-ahead operation increases $|V(\mcP)|$ by at least $1$.
Since there are $O(n^2)$ possible extensions at the vertex $u_2$,
determining whether or not the second case of a Look-ahead operation is applicable, and if so then applying it,
can be done in $O(n^4)$ time.

We summarize the above observations on the two new operations into the following lemma.

\begin{lemma}\label{lemma09}
Given a collection $\mcP$ of vertex-disjoint paths of order in between $4$ and $7$,
determining whether or not one of the two operations {\em Re-cover} and {\em Look-ahead} is applicable, and if so then applying it,
can be done in $O(n^6)$ time.
Each operation either increases $|V(\mcP)|$ by at least $1$,
or keeps $|V(\mcP)|$ unchanged and increases the number of $4$-paths by at least $1$.
\end{lemma}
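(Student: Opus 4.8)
The plan is to assemble the lemma from the per-operation bounds established in the three paragraphs immediately preceding the statement — one for Re-cover and one for each of the two cases of Look-ahead — since the lemma is precisely their conjunction. I would split the argument into a running-time part and an effect part, treat Re-cover and the two Look-ahead cases separately in each, and then take the worst case.

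For the running time, I would first recall that there are only $O(n^2)$ pairs $(P, P')$ of paths in $\mcP$, and that since $n(P), n(P') \le 7$ the set $V(P) \cup V(P')$ has at most $14$ vertices; hence deciding whether that vertex set admits a re-covering by paths of order at least $4$ that contains at least one $4$-path, and producing one, is a constant-size search costing $O(1)$, which yields $O(n^2)$ for Re-cover. For the first Look-ahead case I would enumerate the $O(n^3)$ candidate extensions $e(u_t)$ with $n(e(u_t)) = t$ for $t \in \{2, 3\}$, perform the trial replacement, and test whether the freed prefix enables an internal Rep using the $O(n^{k-1}) = O(n^3)$ extension-search underlying Lemma~\ref{lemma02} with $k = 4$, restricted to extensions that meet the freed vertices; this is $O(n^3)$ per candidate and hence $O(n^6)$ overall. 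The second case enumerates the $O(n^2)$ extensions $e(u_2)$ and, per candidate, locates the vertex $w$ and applies the induced Rep in $O(n^2)$, for a total of $O(n^4)$. Taking the maximum of $O(n^2)$, $O(n^6)$ and $O(n^4)$ gives the claimed $O(n^6)$.

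For the effect on $|V(\mcP)|$, the Re-cover operation re-covers exactly $V(P) \cup V(P')$ and so leaves $|V(\mcP)|$ unchanged, while by construction the new cover contains at least one more $4$-path; this is the ``unchanged-but-one-more-$4$-path'' alternative of the lemma. The first Look-ahead case terminates in a genuine Rep and thus raises $|V(\mcP)|$ by at least $1$. Only the second case needs a composite count: replacing $P$ by $u_0$-$u_1$-$u_2$-$e(u_2)$ covers the two vertices of $e(u_2)$ but frees the three vertices $v_0$-$v_1$-$v_2$, a net loss of $1$; since $w$ then lies within distance $1$ of an endpoint of its path, the ensuing Rep that reattaches the order-$3$ path $v_0$-$v_1$-$v_2$ raises $|V(\mcP)|$ by at least $2$, for a net gain of at least $1$.

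The most delicate point, which I expect to be the main obstacle, is the first Look-ahead case: I must argue both that the trial replacement enables only Rep operations that consume the freed prefix — so that the cost per candidate stays $O(n^3)$ rather than a blind $O(n^4)$ rescan — and that any such Rep is genuinely new rather than one already applicable to $\mcP$ beforehand. Both follow from the footnoted observation that the internal Rep must involve at least two of the vertices $u_0, \ldots, u_{t-1}$, which I would invoke simultaneously to bound the search and to certify the strict increase. The remaining verifications are routine counting already subsumed by Lemma~\ref{lemma02}.
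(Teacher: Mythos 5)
Your proposal is correct and follows essentially the same route as the paper: the paper establishes the lemma by exactly this per-operation accounting in the paragraphs preceding the statement ($O(n^2)$ for Re-cover via the $|V(P)\cup V(P')|\le 14$ constant-size check, $O(n^6)$ and $O(n^4)$ for the two Look-ahead cases via enumerating $O(n^3)$ resp.\ $O(n^2)$ extensions, and the same $-1/+2$ bookkeeping for the second Look-ahead case), including the footnoted observation that the internal Rep must involve at least two of the freed vertices $u_0,\ldots,u_{t-1}$. Your explicit use of that observation to keep the per-candidate Rep search at $O(n^3)$ is, if anything, slightly more careful than the paper's terse appeal to Lemma~\ref{lemma02}.
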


We are now ready to present the algorithm {\sc Approx2} for the \Max{4} problem, which in fact is very similar to {\sc Approx1}.
It starts with the empty collection $\mcP = \emptyset$;
in each iteration, it determines in order whether any one of the five operations Add, Rep, DoubleRep, Re-cover, and Look-ahead is applicable,
and if so then it applies the operation to update $\mcP$.
During the entire process, $\mcP$ is maintained to be a collection of vertex-disjoint paths of order in between $4$ and $7$.
The algorithm terminates if none of the five operations is applicable for the current $\mcP$, and returns it as the solution.
A simple high level description of the algorithm {\sc Approx2} is depicted in Figure~\ref{fig04}.
From Lemmas~\ref{lemma02} and \ref{lemma09}, the overall running time of {\sc Approx2} is in $O(n^8)$.
Next we show that the worst-case performance ratio of {\sc Approx2} is at most $2$, and thus its higher running time is paid off.
The performance analysis is done through a similar but more careful amortization scheme.

\begin{figure}[htb]
\begin{center}
\framebox{
\begin{minipage}{5in}
Algorithm {\sc Approx2}:\\
Input: A graph $G = (V, E)$;
\begin{itemize}
\parskip=0pt
\item[1.]
	initialize $\mcP = \emptyset$;
\item[2.]
	{\bf while} (one of Add, Rep, DoubleRep, Re-cover and Look-ahead is applicable)
	\begin{itemize}
	\parskip=0pt
	\item[2.1]
		apply the operation to update $\mcP$;
	\end{itemize}
\item[3.]
	return the final $\mcP$.
\end{itemize}
\end{minipage}}
\end{center}
\caption{A high level description of the algorithm {\sc Approx2}.\label{fig04}}
\end{figure}

The amortization scheme assigns the vertices of $V(\mcQ)$ to the vertices of $V(\mcP) \cap V(\mcQ)$.
Firstly, each vertex of $V(\mcQ) \cap V(\mcP)$ is assigned to itself.
Next, recall that $\mcQ - \mcP$ is the collection of sub-paths of the paths of $\mcQ$ after removing those vertices in $V(\mcQ) \cap V(\mcP)$.
By Lemma~\ref{lemma03}, each path $S$ of $\mcQ - \mcP$ is associated with one or two vertices in $V(\mcQ) \cap V(\mcP)$.

When the path $S$ of $\mcQ - \mcP$ is associated with two vertices $v$ and $v'$ of $V(\mcQ) \cap V(\mcP)$,
a half of all the vertices on $S$ are assigned to each of $v$ and $v'$,
except the {\em first} special case below.
In this {\em first} special case,
$n(S) = 1$, $v = u_1$ (or $v = v_1$, respectively) and $v' = u'_3$ on some paths $P, P' \in \mcP$ with $n(P) \ge 5$ and $n(P') = 7$, respectively;
then the whole vertex on $S$ is assigned to $u'_3$ (that is, none is assigned to $u_1$, or $v_1$, respectively).

When the path $S$ of $\mcQ - \mcP$ is associated with only one vertex $v$ of $V(\mcQ) \cap V(\mcP)$,
all the vertices on $S$ are assigned to the vertex $v$,
except the {\em second} and the {\em third} special cases below
where $n(S) = 1$ and $v = u_1$ (or $v = v_1$, respectively) on a path $P \in \mcP$ with $n(P) \ge 5$.
In the {\em second} special case, $S$-$u_1$-[$r$]-$u'_3$ is a subpath of some path $Q \in \mcQ$,
where $u'_3$ is the center vertex of some $7$-path $P' \in \mcP$,
and [$r$] means the vertex $r$ might not exist but if it exists then it is not the center vertex of any $7$-path in $\mcP$;
in this case, a half of the vertex on $S$ is assigned to each of $u_1$ and $u'_3$.%
\footnote{In the second special case, if $r$ exists and $r \notin V(\mcP)$, then it falls into the first special case and, as a result,
	the vertex $u_1$ receives only $\frac 12$ vertex and the vertex $u'_3$ receives $\frac 32$ vertices.}
In the {\em third} special case, $S$-$u_1$-$u'_2$ is a subpath of some path $Q \in \mcQ$,
where $u'_2 \ne u_2$ is on some path $P' \in \mcP$ of order $5$ or $6$;
in this case, a half of the vertex on $S$ is assigned to each of $u_1$ and $u'_2$.

One sees that in the amortization scheme,
all the vertices of $V(\mcQ)$ are assigned to the vertices of $V(\mcP) \cap V(\mcQ)$;
conversely, each vertex of $V(\mcP) \cap V(\mcQ)$ receives itself, plus some or all the vertices on its associated paths of $\mcQ - \mcP$,
and some $u_2$, $v_2$ vertices on $5$-/$6$-paths and some $u_3$ vertices on $7$-paths
could receive some additional vertices not on their associated paths, through the three special cases.
(We remark that a vertex and its received vertices are on the same path in $\mcQ$, and that the vertices of $V(\mcP) - V(\mcQ)$ receive nothing.)

The following four lemmas present the joint effect of the above amortization scheme.

\begin{lemma}\label{lemma10}
For any path $P \in \mcP$, each of the vertices $u_0$ and $v_0$ receives at most $1$ vertex;
each of the vertices $u_1$ and $v_1$ receives at most $\frac 52$ vertices;
when $n(P) = 7$, each of the vertices $u_2$ and $v_2$ receives at most $1$ vertex.
\end{lemma}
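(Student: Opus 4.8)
The plan is to bound, for each named vertex, the number of received vertices by combining the degree-type bounds of Lemma~\ref{lemma08} with Lemma~\ref{lemma01} (which caps the order of any associated path of $\mcQ - \mcP$) and the amortization rules. A useful preliminary observation, which I would record once and reuse, is that among the three special cases the only vertices that can ever receive vertices \emph{not} lying on their own associated paths are the center vertex $u_3$ of a $7$-path (first and second special cases) and an index-$2$ vertex on a $5$- or $6$-path (third special case). In particular none of $u_0, v_0$, the vertices $u_1, v_1$, nor $u_2, v_2$ on a $7$-path ever gains from a special case; moreover each special case can only move vertices \emph{away} from $u_1$ or $v_1$, never toward them.

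For $u_0$ and $v_0$: by Lemma~\ref{lemma08}(2) we have $n(u_0) \le 0$, so by Lemma~\ref{lemma01} no path of $\mcQ - \mcP$ is associated with $u_0$; since $u_0$ also gains nothing from any special case, it receives at most itself, i.e.\ at most $1$ vertex. The vertices $u_2, v_2$ on a path $P$ with $n(P)=7$ are handled identically: Lemma~\ref{lemma08}(4) gives $n(u_2)=n(v_2)=0$, so again there are no associated paths and no special-case contribution, yielding at most $1$ vertex each. The $v_0$ and $v_2$ statements follow by the symmetry of the path labeling.

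The main case is $u_1$ (and $v_1$ by symmetry). Here Lemma~\ref{lemma08}(2) gives $n(u_1) \le 1$, so by Lemma~\ref{lemma01} every path of $\mcQ - \mcP$ associated with $u_1$ is a single vertex, and since $u_1$ lies on one path $Q \in \mcQ$ it is associated with at most two such singletons. If $u_1$ is associated with at most one singleton it receives at most $1 + 1 = 2 \le \frac 52$. The crux is two associated singletons $S_1, S_2$. I would first rule out that both are associated with $u_1$ alone: if a singleton $S_i = \{s_i\}$ were associated with $u_1$ only, then since $S_i$ is a maximal sub-path of $Q$ inside $V - V(\mcP)$, the neighbor of $s_i$ on $Q$ other than $u_1$, if it existed, would lie in $V(\mcP) \cap V(\mcQ)$, making $S_i$ associated with it as well; hence $s_i$ would have to be an endpoint of $Q$. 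Were this to hold for both $S_1$ and $S_2$, the path $Q$ would be exactly $s_1$-$u_1$-$s_2$, of order $3$, contradicting that every path of $\mcQ$ has order at least $4$. Thus at least one of $S_1, S_2$ is associated with a second vertex of $V(\mcP) \cap V(\mcQ)$.

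To finish the estimate, under the default half-splitting rule a doubly-associated singleton contributes at most $\frac 12$ to $u_1$ and a singly-associated singleton at most $1$; by the preliminary observation the special cases only decrease $u_1$'s share, so these values are valid upper bounds. With at least one of the two singletons doubly-associated, $u_1$ receives at most $1 + \frac 12 + 1 = \frac 52$. I expect the main obstacle to be precisely this last case: confirming that the special-case redistributions never add to $u_1$, and pinning down that the order-at-least-$4$ property of the paths in $\mcQ$ is exactly what excludes the configuration that would otherwise give the naive value $3$. The $u_0, v_0$ and the $7$-path $u_2, v_2$ bounds, by contrast, are immediate from Lemma~\ref{lemma08} and require no case analysis.
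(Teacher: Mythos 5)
Your proof is correct and follows essentially the same route as the paper's: the $u_0,v_0$ and $7$-path $u_2,v_2$ bounds come from $n(\cdot)=0$ via Lemmas~\ref{lemma08} and \ref{lemma01}, and the $\frac52$ bound for $u_1$ comes from observing that both associated paths are singletons and that the order-at-least-$4$ requirement on paths of $\mcQ$ forces at least one singleton to be shared with a second vertex of $V(\mcP)\cap V(\mcQ)$. Your added check that the three special cases only ever redirect vertices away from (never toward) $u_1$ makes explicit a point the paper leaves implicit, but the argument is the same.
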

\begin{proof}
From Lemma~\ref{lemma08},
$n(u_0) = n(v_0) = 0$, and if $n(P) = 7$ then $n(u_2) = n(v_2) = 0$ too.
Lemma~\ref{lemma01} tells that none of them is associated with any path of $\mcQ - \mcP$.
That is, if any one of them is in $V(\mcQ)$, then it is assigned with itself only, or otherwise it is assigned with nothing.

Also from Lemma~\ref{lemma08}, $n(u_1) \le 1$ and $n(v_1) \le 1$.
We prove the lemma for $u_1$ below.
If $u_1$ is associated with at most one path of $\mcQ - \mcP$, then by the amortization scheme it receives at most $2$ vertices.
If $u_1$ is associated with two paths $\mcQ - \mcP$ denoted as $S_1$ and $S_2$,
then by $n(S_1) = n(S_2) = 1$ we know that $S_1$-$u_1$-$S_2$ is a $3$-subpath of some path $Q \in \mcQ$.
We conclude that at least one of $S_1$ and $S_2$, say $S_1$, is associated with another distinct vertex $v$ of $V(\mcQ) \cap V(\mcP)$.
By the amortization scheme, at most a half of the vertex on $S_1$ is assigned to $u_1$.
Therefore, $u_1$ receives at most $\frac 52$ vertices.
\end{proof}

\begin{lemma}\label{lemma11}
For any $5$-path $P \in \mcP$, if the vertex $u_2$ receives more than $3$ vertices,
then each of the vertices $u_1$ and $v_1$ receives at most $\frac 32$ vertices.
\end{lemma}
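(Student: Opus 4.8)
The plan is to argue locally around the fixed $5$-path $P = u_0$-$u_1$-$u_2$-$v_1$-$v_0$ (so $u_2$ is its center), first pinning down what the hypothesis ``$u_2$ receives more than $3$'' forces, and then showing that this structure is incompatible with $v_1$ receiving more than $\frac32$. Since $u_1$ and $v_1$ are interchanged by reflecting $P$ while fixing $u_2$, and the hypothesis is symmetric under this reflection, it suffices to bound $v_1$; the bound on $u_1$ follows verbatim.

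First I would characterize the hypothesis. By \cref{lemma08}(2) we have $n(u_2)\le 2$, so by \cref{lemma01} every path of $\mcQ-\mcP$ associated with $u_2$ has order at most $2$, and $u_2$ is associated with at most two of them. Tallying the amortization contributions to $u_2$ — itself, its at most two associated paths (each full if solely associated, else a half), and at most one third--special--case half--vertex in the role of a $5$-/$6$-path $u'_2$ — a short finite case count shows that $u_2>3$ is possible only if $u_2$ is an internal vertex of its path $Q\in\mcQ$ and carries, on at least one side, an order-$2$ path solely associated with it. In particular $n(u_2)=2$, witnessed by an order-$2$ extension $e(u_2)$ into $V-V(\mcP)$, of the exact form required to trigger \cref{op5}(i).

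Next, suppose for contradiction that $v_1$ receives more than $\frac32$. Because $n(v_1)\le 1$ (\cref{lemma08}(2)), every extension and every associated path at $v_1$ has order at most $1$; by \cref{lemma10} the only way to exceed $\frac32$ is that $v_1\in V(\mcQ)$ and $v_1$ is fed a \emph{full} (solely-associated) single vertex $c$ in $\mcQ-\mcP$, possibly together with a shared single vertex. The key device is the Look-ahead operation built on $e(u_2)$: replacing $P$ by the rebuilt path $e(u_2)$-$u_2$-$v_1$-$v_0$ (\cref{op5}(i)) releases $u_0$ and $u_1$ into $V-V(\mcP)$, while the symmetric examination of $u_0$-$u_1$-$u_2$-$e(u_2)$ for the center releases $v_0,v_1$. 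I would show that if $c$ (or the shared single vertex at $v_1$) reaches one of the freed vertices $u_0,u_1$ — equivalently, if $v_1$'s $\mcQ$-structure lets it connect back into $P$'s own $u$-side after the rebuild — then $v_1$ acquires an extension of order at least $2$ in the new free set, the internal Rep past $v_0$ becomes applicable, and hence {\sc Approx2} is not at termination, a contradiction. The complementary situation, in which all of $v_1$'s single-vertex extensions attach only to vertices lying on \emph{other} paths of $\mcP$, is absorbed by the second and third special cases of the amortization: such a neighbour is then a $7$-path center or a $5$-/$6$-path $u_2$-vertex, which diverts half of the offending single vertex away from $v_1$ and caps its receipt at $\frac32$.

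The main obstacle is exactly the case analysis in the previous step: ruling out, in every configuration, the coexistence of a full single vertex at $v_1$ with the order-$2$ extension already present at $u_2$. This means tracking where $u_2$'s order-$2$ extension and $v_1$'s single-vertex extensions/covered $\mcQ$-neighbours sit relative to $P$ and to one another (on the same $\mcQ$-path or different ones, attaching to $P$'s freed vertices or to another $\mcP$-path), and invoking the precise wording of \cref{op5}(i) (both the prefix-rebuild and the suffix-rebuild examined for the center) together with \cref{op2} and the special-case accounting. I expect the delicate point to be verifying that, whenever the Look-ahead does \emph{not} fire, the attachment is forced into a pattern that one of the special cases reduces; the reflection symmetry then yields the identical bound for $u_1$ with no new ideas.
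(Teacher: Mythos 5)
Your overall strategy coincides with the paper's proof of this lemma: the hypothesis forces an order-$2$ path of $\mcQ-\mcP$ associated with $u_2$ (an extension $e(u_2)$ of order $2$), the non-applicability of the Look-ahead operation built on that extension constrains where the freed end of $P$, together with any singletons of $\mcQ-\mcP$ hanging off $u_1$ or $v_1$, can attach --- only to the center $u'_3$ of a $7$-path or to $u'_2/v'_2$ of a $5$-/$6$-path --- and the special cases of the amortization scheme then divert part of the offending singleton away from $u_1$ (resp.\ $v_1$). That is exactly the mechanism the paper uses, including the symmetric rebuilds in both directions around the center.

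The gap is in the final accounting. You enumerate the ways $v_1$ could exceed $\frac 32$ as ``a full singleton, possibly together with a shared singleton,'' which omits the case of two shared singletons ($1+\frac 12+\frac 12 = 2 > \frac 32$); more importantly, whenever $v_1$ is associated with \emph{two} singletons $S_3$ and $S_4$ (so $S_3$-$v_1$-$S_4$ is a subpath of some $Q \in \mcQ$), your proposed remedy --- the second and third special cases ``divert half of the offending single vertex'' --- is arithmetically insufficient: halving one of the two singletons still leaves $v_1$ with at least $2$ vertices. What closes these cases in the paper is the \emph{first} special case, which your plan never invokes. Concretely, after the rebuild the free $3$-path $S_3$-$v_1$-$S_4$ is an available extension, so by the termination condition neither $S_3$ nor $S_4$ may be adjacent to any covered vertex at distance at most $2$ from an end of a path of the new collection (that would enable the internal Rep); since $n(Q) \ge 4$, at least one of them, say $S_3$, is therefore associated with the center $u'_3$ of a $7$-path, and the first special case assigns the \emph{whole} of $S_3$ to $u'_3$ and none of it to $v_1$. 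Only then does the remaining singleton, halved by the second special case, bring the total down to $\frac 32$. Your single-sole-singleton case goes through as you describe, and the two-singleton cases are repairable entirely within your own framework, but as written the plan does not reach the claimed bound.
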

\begin{proof}
We prove the lemma for $u_1$ below (for $v_1$, symmetrically).

Assume $u_2$ is on a path $Q \in \mcQ$.
Firstly, if $n(u_2) \le 1$, then from either side of $Q$, $u_2$ receives at most one vertex on the path $Q$.
That is, $u_2$ receives at most three vertices, which contradicts the lemma premise.
We therefore conclude that $u_2$ is associated with a path $S$ of $\mcQ - \mcP$ of order $n(S) = 2$.
Note that the termination condition of the algorithm {\sc Approx2} says that using $S$ as an extension at $u_2$,
no Look-ahead($P$) operation is applicable.

This lemma holds obviously if $u_1$ is not associated with any path of $\mcQ - \mcP$.

If $u_1$ is associated with only one path of $\mcQ - \mcP$, denoted as $S_1$, then by Lemma~\ref{lemma01} $n(S_1) = 1$.
If $S_1$ is associated with another distinct vertex of $V(\mcQ) \cap V(\mcP)$, then at most a half of the vertex on $S_1$ is assigned to $u_1$,
and the lemma is proved.
In the other case, $u_1$-$u_2$ is not an edge in $E(\mcQ)$,
since otherwise the third special case does not apply and hence $u_2$ would receive at most $3$ vertices, contradicting to the lemma premise;
then at least one of the two vertices $u_0$ and $u_1$ is adjacent to a vertex $u$ on some path in $\mcP$ through an edge of $E(Q)$.
Recall that when the algorithm {\sc Approx2} explores a possible Look-ahead operation using $S$ as an extension at $u_2$,
the path $S_1$-$u_1$-$u_0$ is an available extension after the replacement.
We conclude from Lemma~\ref{lemma08} that $u_0$ can be adjacent to only $u'_3$ of some 7-path $P' \in \mcP$,
and $u_1$ can be adjacent to only one of the $u'_2$ and $v'_2$ of some 5-/6-path $P' \in \mcP$, or $u'_3$ of some 7-path $P' \in \mcP$,
due to no applicable Re-cover or Look-ahead operation.
Any one of the above possible scenarios falls into either the second or the third special case in the amortization scheme,
and so that $u_1$ receives only a half of the vertex on $S_1$.
That is, the lemma is proved.

If $u_1$ is associated with two paths $S_1$ and $S_2$ of $\mcQ - \mcP$, that is, $S_1$-$u_1$-$S_2$ is a subpath of a path in $\mcQ$,
then by Lemma~\ref{lemma01} $n(S_1) = n(S_2) = 1$.
Recall that when the algorithm {\sc Approx2} explores a possible Look-ahead operation using $S$ as an extension at $u_2$,
the path $S_1$-$u_1$-$S_2$ is an available extension after the replacement.
We conclude from Lemma~\ref{lemma08} that $S_1$ and $S_2$ can be adjacent to only the center vertex of some $7$-path in $\mcP$,
due to no applicable Look-ahead operation.
Apparently one least one of $S_1$ and $S_2$, say $S_1$, is adjacent to the center vertex $u'_3$ of some $7$-path $P' \in \mcP$.
By the first special case in the amortization scheme, the whole vertex on $S_1$ is assigned to $u'_3$.
If $S_2$ is also adjacent to the center vertex $u''_3$ of some $7$-path $P'' \in \mcP$, then $u_1$ receives only $1$ vertex, which is itself;
otherwise, by the second special case in the amortization scheme, a half of the vertex on $S_2$ is assigned to $u'_3$,
and thus $u_1$ receives a total of $\frac 32$ vertices.
The lemma is proved.
\end{proof}

\begin{lemma}\label{lemma12}
For any $6$-path $P \in \mcP$, if the vertices $u_2$ and $v_2$ together receive more than $5$ vertices,
then the total number of vertices received by all the vertices of $P$ is at most $12$.
\end{lemma}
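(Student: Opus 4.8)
The plan is to bound the total credit $r(P) = r(u_0) + r(u_1) + r(u_2) + r(v_2) + r(v_1) + r(v_0)$ for the $6$-path $P = u_0$-$u_1$-$u_2$-$v_2$-$v_1$-$v_0$ and to show it never exceeds $12 = 2\,n(P)$ under the hypothesis $r(u_2) + r(v_2) > 5$. First I would record the cheap per-vertex bounds from Lemma~\ref{lemma10}: $r(u_0), r(v_0) \le 1$ and $r(u_1), r(v_1) \le \frac 52$. These already control everything except the two middle vertices, so the whole game is to trade the large value of $r(u_2) + r(v_2)$ against a forced saving on one inner vertex.

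The second step pins down the middle pair. By Lemma~\ref{lemma08}(2), $n(u_2), n(v_2) \le 2$, so by Lemma~\ref{lemma01} every path of $\mcQ - \mcP$ associated with $u_2$ or $v_2$ has order at most $2$, whence $r(u_2), r(v_2) \le 5$. The crucial ingredient is Lemma~\ref{lemma08}(3): if both $n(u_2) > 0$ and $n(v_2) > 0$, then $u_2$ and $v_2$ share a single common neighbor $w \in V - V(\mcP)$ that is the only neighbor of either in $V - V(\mcP)$. I would argue that then every path of $\mcQ - \mcP$ associated with $u_2$ or $v_2$ must reach it through $w$; since $w$ lies on one path of $\mcQ$ and has $\mcQ$-degree at most $2$, a short case check gives $r(u_2) + r(v_2) \le 4$, contradicting the hypothesis. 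Hence one middle vertex, say (by symmetry) $v_2$, satisfies $n(v_2) = 0$, so $r(v_2) \le 1$; and since $u_2$ is associated with at most two paths of $\mcQ - \mcP$, each of order at most $2$, its credit exceeds $4$ only by reaching $5$, which happens exactly when both associated paths are order-$2$ and wholly assigned to $u_2$. Thus the hypothesis forces $r(u_2) = 5$ and $r(v_2) = 1$, and $u_2$ is the center of a $5$-path $a_2$-$a_1$-$u_2$-$b_1$-$b_2$ of $\mcQ$ with $a_1, a_2, b_1, b_2 \in V - V(\mcP)$, so $u_2$ admits an order-$2$ extension $e(u_2)$.

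With $r(u_2) = 5$ and $r(v_2) = 1$ fixed, it remains to show $r(u_1) \le \frac 32$; then $r(P) \le 1 + \frac 32 + 5 + 1 + \frac 52 + 1 = 12$ and we are done. This is the exact analog of Lemma~\ref{lemma11}, now driven by the order-$2$ extension $e(u_2)$ on the $6$-path: a first-type Look-ahead operation (Operation~\ref{op5}) at $u_2$ replaces the prefix $u_0$-$u_1$ by $e(u_2)$ and thereby frees $u_0$ and $u_1$ as an available extension. Assuming $r(u_1) > \frac 32$, the vertex $u_1$ (with $n(u_1) \le 1$) must be associated with an order-$1$ path contributing more than $\frac 12$, and I would mirror the argument of Lemma~\ref{lemma11}: the non-applicability of this Look-ahead, together with the non-applicability of the Re-cover operation (Operation~\ref{op4}), forces every order-$1$ path associated with $u_1$ to attach only to the center of a $7$-path or to a $u'_2/v'_2$ vertex of a $5$-/$6$-path, which are precisely the configurations handled by the first, second and third special cases of the amortization scheme. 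In every such configuration $u_1$ is credited only a half of each such order-$1$ vertex, so $r(u_1) \le \frac 32$, a contradiction. Combining with $r(v_1) \le \frac 52$, $r(u_0), r(v_0) \le 1$ then yields the bound.

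The main obstacle is this last step. Unlike the first two steps, which are bookkeeping over Lemmas~\ref{lemma01}, \ref{lemma08} and \ref{lemma10}, establishing $r(u_1) \le \frac 32$ requires carefully replaying the Look-ahead analysis of Lemma~\ref{lemma11} in the $6$-path setting and checking that the freed prefix $u_0$-$u_1$ interacts with the associated order-$1$ paths of $u_1$ exactly as in the $5$-path case, so that every way $u_1$ could receive more than $\frac 32$ triggers either a Re-cover or a Look-ahead (contradicting termination) or falls into a special case that halves the credit. Some care is also needed in the middle-pair step to handle the possibility that an order-$2$ extension of $u_2$ shares vertices with $u_1$'s associated path, which I would resolve by selecting the extension $e(u_2)$ (either $a_1$-$a_2$ or $b_1$-$b_2$) disjoint from it.
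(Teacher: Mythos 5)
Your overall strategy matches the paper's --- bound $u_0, v_0, v_1$ by Lemma~\ref{lemma10}, force an order-$2$ extension at $u_2$, and replay the Lemma~\ref{lemma11} argument with the first case of the Look-ahead operation to obtain $r(u_1) \le \frac 32$ --- but there is a genuine gap in how you cap $r(v_2)$, and it sits exactly where the paper needs the \emph{second} case of the Look-ahead operation (Operation~\ref{op5}(ii)), which you never invoke. Your middle-pair step treats $r(u_2)+r(v_2)$ as if it were determined solely by the paths of $\mcQ-\mcP$ associated with $u_2$ and $v_2$. But the third special case of the amortization scheme lets the $u_2$ and $v_2$ of a $6$-path receive extra half-vertices from order-$1$ paths of $\mcQ-\mcP$ that are \emph{not} associated with them (via a subpath $S$-$u'_1$-$u_2$ of a path of $\mcQ$, where $u'_1$ lies on another path of $\mcP$). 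Consequently $n(v_2)=0$ only yields $r(v_2)\le 2$, not $r(v_2)\le 1$; and in the case $n(u_2)>0$ and $n(v_2)>0$ with the common neighbour $w$ of Lemma~\ref{lemma08}(3), the count can reach $r(u_2)+r(v_2)\le 3.5+2=5.5$, so the contradiction you claim with the hypothesis does not materialize. With only $r(v_2)\le 2$ available, your final sum becomes $1+\frac 32+5+2+\frac 52+1=13$, which misses the target of $12$. (Your intermediate claim that the hypothesis forces exactly $r(u_2)=5$ and $r(v_2)=1$ is also unjustified for the same reason, e.g.\ $r(u_2)=4.5$, $r(v_2)=1$ satisfies the premise.)

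The paper closes this hole as follows: once the premise forces some order-$2$ path $S_1$ of $\mcQ-\mcP$ to be associated with $u_2$, it argues that $v_2$ receives \emph{nothing} beyond itself, because every way for $v_2$ to receive more --- being associated with $S_1$, or being adjacent in $\mcQ$ to a vertex at distance at most $1$ from the end of some path of $\mcP$ (which is what the third special case would require) --- would make Operation~\ref{op5}(ii) applicable, using $S_1$ as the extension at $u_2$ and $v_0$-$v_1$-$v_2$ as the extension for the subsequent Rep, contradicting termination. You need to add this argument (or an equivalent use of Look-ahead case (ii)) to your second step; the remainder of your proposal, in particular the $r(u_1)\le\frac 32$ analysis via Look-ahead case (i), Re-cover, and the three special cases, is sound and parallels the paper's.
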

\begin{proof}
From Lemma~\ref{lemma08}, for any path $S_1$ of $\mcQ - \mcP$ associated with $u_2$ and
any path $S_2 \ne S_1$ of $\mcQ - \mcP$ associated with $v_2$, they are vertex-disjoint and thus $n(S_1) + n(S_2) \le 2$,
and furthermore if $n(S_1) + n(S_2) = 2$ then one of them is null.
This implies that $u_2$ and $v_2$ together can have at most two associated paths in $\mcQ - \mcP$,
and if there are two, then both are associated with either $u_2$ only or $v_2$ only.

If $u_2$ is associated with no path in $\mcQ - \mcP$,
then by the third special case in the amortization scheme $u_2$ can receive at most $2$ vertices.
If $u_2$ is associated with a path $S$ in $\mcQ - \mcP$ with $n(S) = 1$,
then $u_2$ can receive at most $\frac 52$ vertices.
If $u_2$ is associated with two paths $S_1$ and $S_2$ in $\mcQ - \mcP$ with $n(S_1) = n(S_2) = 1$,
then $u_2$ can receive at most $\frac 52$ vertices.
The same argument applies to $v_2$,
and from the lemma premise we conclude that $u_2$ is associated with a path of $\mcQ - \mcP$, denoted as $S_1$, with $n(S_1) = 2$
(or for $v_2$, which can be proved symmetrically).

We claim that $v_2$ receives no vertex other than itself.
This claim is true since otherwise $v_2$ is also associated with $S_1$ or $v_2$ has to be involved in the third special case in the amortization scheme,
that is, $v_2$ would be adjacent to one of the two vertices on $S_1$ or the vertex $u'_1 \ne v_1$ on some path $P' \in \mcP$.
However, this implies that a Look-ahead operation is applicable by
using $S_1$ as the extension at $u_2$ and $v_0$-$v_1$-$v_2$ as the extension for the subsequent Rep operation (see Operation~\ref{op5}).

Given that $u_2$ receives at most $5$ vertices,
the above claim on $v_2$ tells that $u_2$ and $v_2$ together receive at most $6$ vertices.
By Lemma~\ref{lemma10}, if one of $u_0$ and $v_0$ is not in $V(\mcQ)$, or one of $u_1$ and $v_1$ is not associated with any path of $\mcQ - \mcP$,
then the lemma is proved.
Below we consider the remaining case, on $u_1$ specifically to show that it receives at most $\frac 32$ vertices.

If $u_1$ is associated with only one path of $\mcQ - \mcP$, denoted as $S_3$, then by Lemma~\ref{lemma01} $n(S_3) = 1$.
If $S_3$ is associated with another distinct vertex of $V(\mcQ) \cap V(\mcP)$, then at most a half of the vertex on $S_3$ is assigned to $u_1$,
and so $u_1$ receives at most $\frac 32$ vertices.
In the other case, $u_0$ or $u_1$ is adjacent to a vertex $u \ne u_1$ on some path in $\mcP$ through an edge of $E(\mcQ)$,
and recall that when the algorithm {\sc Approx2} explores a possible Look-ahead operation using $S_1$ as an extension at $u_2$,
the path $S_3$-$u_1$-$u_0$ is an available extension after the replacement.
We conclude from Lemma~\ref{lemma08} that $u_0$ can be adjacent to only $u'_3$ of some 7-path $P' \in \mcP$,
and $u_1$ can be adjacent to only one of the $u'_2$ and $v'_2$ of some 5-/6-path $P' \in \mcP$, or $u'_3$ of some 7-path $P' \in \mcP$,
due to no applicable Re-cover or Look-ahead operation.
Any one of the above scenarios falls into the second or the third special case in the amortization scheme,
and so that $u_1$ receives only a half of the vertex on $S_3$.
That is, $u_1$ receives a total of $\frac 32$ vertices.

If $u_1$ is associated with two paths $S_3$ and $S_4$ of $\mcQ - \mcP$, that is, $S_3$-$u_1$-$S_4$ is a subpath of a path in $\mcQ$,
then by Lemma~\ref{lemma01} $n(S_3) = n(S_4) = 1$.
Recall that when the algorithm {\sc Approx2} explores a possible Look-ahead operation using $S_1$ as an extension at $u_2$,
the path $S_3$-$u_1$-$S_4$ is an available extension after the replacement.
We conclude from Lemma~\ref{lemma08} that $S_3$ and $S_4$ can be adjacent to only the center vertex of some $7$-path in $\mcP$,
due to no applicable Look-ahead operation.
Apparently one least one of $S_3$ and $S_4$, say $S_3$, is adjacent to the center vertex $u'_3$ of some $7$-path $P' \in \mcP$.
By the first special case in the amortization scheme, the whole vertex on $S_3$ is assigned to $u'_3$.
If $S_4$ is also adjacent to the center vertex $u''_3$ of some $7$-path $P'' \in \mcP$, then $u_1$ receives only $1$ vertex, which is itself;
otherwise, by the second special case in the amortization scheme, a half of the vertex on $S_4$ is assigned to $u'_3$,
and thus $u_1$ receives a total of $\frac 32$ vertices.

In summary, for the remaining case,
$u_2$ and $v_2$ together receive at most $6$ vertices, and $u_1$ receives a total of at most $\frac 32$ vertices.
By Lemma~\ref{lemma10}, the total number of vertices received by all the vertices of $P$ is at most $12$.
This finishes the proof of the lemma.
\end{proof}

\begin{lemma}\label{lemma13}
For any $7$-path $P \in \mcP$, if the vertex $u_3$ receives more than $5$ vertices,
then the vertices $u_0$ and $u_1$ ($v_0$ and $v_1$, respectively) together receive at most $\frac 52$ vertices.
\end{lemma}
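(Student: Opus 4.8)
The plan is to mirror the argument used for Lemma~\ref{lemma11} (the $5$-path case), adapting it to the longer prefix that a $7$-path exposes. Since $n(u_0)=0$ forces $u_0$ to receive at most $1$ vertex by Lemma~\ref{lemma10}, it suffices to prove that $u_1$ receives at most $\frac32$ vertices; the claim for the pair $(v_0,v_1)$ then follows by the symmetric argument (using the mirror replacement $u_0$-$u_1$-$u_2$-$u_3$-$e(u_3)$). I would carry this out in three stages.

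First, I would translate the hypothesis ``$u_3$ receives more than $5$ vertices'' into the structural fact $n(u_3)=3$. To do this I would bound, side by side along the path $Q\in\mcQ$ through $u_3$, everything $u_3$ can collect: itself (one vertex), its at most two associated paths of $\mcQ-\mcP$ (each of order at most $n(u_3)\le 3$ by Lemmas~\ref{lemma01} and \ref{lemma08}), and the extra fractions routed to a $7$-path center through the first and second special cases. A short case check shows that a single side can contribute strictly more than $2$ only when it carries a full order-$3$ associated path (order-$2$ full gives $2$, an order-$3$ split or the footnote pattern gives $\frac32$), so the total can exceed $5$ only if $n(u_3)=3$ and at least one side of $u_3$ in $Q$ is a full order-$3$ run. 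In particular there is an extension $e(u_3)$ with $n(e(u_3))=3$, and I would take it to be exactly that $\mcQ$-run, which is a maximal run of $\mcQ-\mcP$ and hence vertex-disjoint from any order-$1$ path of $\mcQ-\mcP$ attached to $u_1$.

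Second, I would invoke termination. Because $n(u_3)=3$, the Look-ahead operation of Operation~\ref{op5}(i) applied at the center $u_3$ is \emph{not} applicable, so after replacing $P$ by $e(u_3)$-$u_3$-$v_2$-$v_1$-$v_0$, which frees the whole prefix $u_0,u_1,u_2$, no Rep operation is enabled. Using $n(u_0)=n(u_2)=0$ from Lemma~\ref{lemma08}, the only free edges incident to this prefix are those inside $u_0$-$u_1$-$u_2$ together with the one or two order-$1$ paths of $\mcQ-\mcP$ attached to $u_1$, so every available extension built from the freed vertices has order at most $3$ (and at most $2$ when attached at the interior vertex $u_1$). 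The non-applicability of Rep, together with the absence of an applicable Re-cover or Look-ahead operation and the order bound $n(P')\le 7$ of Lemma~\ref{lemma08}(1), then constrains the $\mcQ$-neighbors of the prefix: $u_0$ can be adjacent only to the center $u'_3$ of some $7$-path, and $u_1$ can be adjacent only to $u'_2$ or $v'_2$ of some $5$-/$6$-path or to the center $u'_3$ of some $7$-path.

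Third, I would read these constrained adjacencies against the amortization rules, which is where they line up exactly with the triggers of the second and third special cases. I would split into the usual subcases — $u_1$ associated with no path, with one order-$1$ path $S_1$, or with two order-$1$ paths $S_1,S_2$ of $\mcQ-\mcP$ — and show in each that the vertices on $u_1$'s associated paths are either halved or fully rerouted to a $7$-path center, so that $u_1$ retains at most $\frac32$. Combined with $u_0\le 1$ this gives that $u_0$ and $u_1$ together receive at most $\frac52$. The main obstacle is the second stage: reproducing, for the order-$3$ prefix, the precise exclusion of every ``bad'' neighbor through the joint non-applicability of Rep, Re-cover and Look-ahead, and verifying that the one extra freed vertex $u_2$ (absent in the $5$-path setting of Lemma~\ref{lemma11}) never opens an additional extension leading to a neighbor outside the $5$-/$6$-path $u'_2,v'_2$ and $7$-path-center patterns that the special cases are designed to absorb.
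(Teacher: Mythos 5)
Your plan follows the paper's proof essentially step for step: you first force $n(u_3)=3$ from the hypothesis, then exploit the non-applicability of the Look-ahead at the center $u_3$ (via the failed internal Rep after replacing $P$ by $e(u_3)$-$u_3$-$v_2$-$v_1$-$v_0$, together with Re-cover) to pin $u_0$'s and $u_1$'s $\mcQ$-neighbours to exactly the targets of the second and third special cases, and finally run the same three-way case split on $u_1$'s associated paths to get $u_1 \le \frac32$ and hence $u_0+u_1 \le \frac52$. The argument is correct and the approach is the same as the paper's, so no further comparison is needed.
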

\begin{proof}
Let $S_1$ be a path of $\mcQ - \mcP$ associated with the vertex $u_3$, that is, $S_1$-$u_3$ is a subpath of some path $Q \in \mcQ$.
If $n(S_1) \le 2$, then the number of vertices on this side of the path $Q$ assigned to $u_3$ is at most $2$.
Since $u_3$ receives more than $5$ vertices, and there are at most two paths of $\mcQ - \mcP$ associated with the vertex $u_3$,
we conclude that, at least one associated path has order $3$, and we assume without loss of generality that $n(S_1) = 3$.
Furthermore, there are two paths of $\mcQ - \mcP$ associated with $u_3$.

If the vertex $u_1$ is not associated with any path of $\mcQ - \mcP$, or if $u_0 \notin V(\mcQ)$,
then by Lemma~\ref{lemma10} the vertices $u_0$ and $u_1$ together receive at most $\frac 52$ vertices,
and the lemma is proved.
Below we consider the remaining case, on $u_1$ specifically to show that it receives at most $\frac 32$ vertices.

If $u_1$ is associated with only one path of $\mcQ - \mcP$, denoted as $S_3$, then by Lemma~\ref{lemma01} $n(S_3) = 1$.
If $S_3$ is associated with another distinct vertex of $V(\mcQ) \cap V(\mcP)$, then at most a half of the vertex on $S_3$ is assigned to $u_1$,
and so $u_1$ receives at most $\frac 32$ vertices.
In the other case, at least one of $u_0, u_1, u_2$ is adjacent to a vertex $u \ne u_1$ on some path in $\mcP$ through an edge of $E(\mcQ)$,
and recall that when the algorithm {\sc Approx2} explores a possible Look-ahead operation using $S_1$ as an extension at $u_3$,
the path $u_0$-$u_1$-$u_2$ is an available extension after the replacement.
We conclude from Lemma~\ref{lemma08} that $u_0$ and $u_2$ can be adjacent to only $u'_3$ of some 7-path $P' \in \mcP$,
and $u_1$ can be adjacent to only one of the $u'_2$ and $v'_2$ of some 5-/6-path $P' \in \mcP$, or $u'_3$ of some 7-path $P' \in \mcP$,
due to no applicable Re-cover or Look-ahead operation.
Any one of the above scenarios falls into the second or the third special case in the amortization scheme,
and so that $u_1$ receives only a half of the vertex on $S_3$.
That is, $u_1$ receives a total of $\frac 32$ vertices.

If $u_1$ is associated with two paths $S_3$ and $S_4$ of $\mcQ - \mcP$, that is, $S_3$-$u_1$-$S_4$ is a subpath of a path in $\mcQ$,
then by Lemma~\ref{lemma01} $n(S_3) = n(S_4) = 1$.
Recall that when the algorithm {\sc Approx2} explores a possible Look-ahead operation using $S_1$ as an extension at $u_3$,
the path $S_3$-$u_1$-$S_4$ is an available extension after the replacement.
We conclude from Lemma~\ref{lemma08} that $S_3$ and $S_4$ can be adjacent to only the center vertex of some $7$-path in $\mcP$,
due to no applicable Look-ahead operation.
Apparently one least one of $S_3$ and $S_4$, say $S_3$, is adjacent to the center vertex $u'_3$ of some $7$-path $P' \in \mcP$.
By the first special case in the amortization scheme, the whole vertex on $S_3$ is assigned to $u'_3$.
If $S_4$ is also adjacent to the center vertex $u''_3$ of some $7$-path $P'' \in \mcP$, then $u_1$ receives only $1$ vertex, which is itself;
otherwise, by the second special case in the amortization scheme, a half of the vertex on $S_4$ is assigned to $u'_3$,
and thus $u_1$ receives a total of $\frac 32$ vertices.

In summary, for the remaining case, $u_0$ and $u_1$ together receive at most $\frac 52$ vertices.
This finishes the proof of the lemma.
\end{proof}

\begin{theorem}\label{thm02}
The algorithm {\sc Approx2} is an $O(|V|^8)$-time $2$-approximation algorithm for the \Max{4} problem,
and $\frac {16}9$ is a lower bound on its performance ratio.
\end{theorem}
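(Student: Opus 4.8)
The plan is to prove the three assertions in turn: the $O(|V|^8)$ running time, the upper bound of $2$ on the approximation ratio, and the lower bound of $\frac{16}{9}$. The running time is the quickest. By Lemmas~\ref{lemma02} and~\ref{lemma09}, each of the five operations can be tested and, if applicable, applied in $O(n^6)$ time. The quantity $|V(\mcP)|$ is non-decreasing and bounded by $n$, so it strictly increases at most $n$ times; while it stays fixed, only a Re-cover operation can fire, and each such operation strictly increases the number of $4$-paths, which is at most $n/4$. Hence there are $O(n)$ consecutive Re-covers between coverage increases, so $O(n^2)$ iterations in total, giving the $O(n^8)$ bound.

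For the ratio, I would mimic the proof of Theorem~\ref{thm01}. Since the refined amortization scheme assigns every vertex of $V(\mcQ)$ to some vertex of $V(\mcP)\cap V(\mcQ)$, we have $\sum_{P\in\mcP} r(P)=|V(\mcQ)|$, where $r(P)$ is the number of vertices received by the vertices of $P$; thus it suffices to prove $r(P)\le 2n(P)$ for every $P\in\mcP$ and sum. I would split on $n(P)\in\{4,5,6,7\}$ from Lemma~\ref{lemma08}(1). For $n(P)=4$, Lemma~\ref{lemma10} alone gives $r(P)\le 1+\frac 52+\frac 52+1=7\le 8$. For $n(P)=5$, I distinguish whether the center $u_2$ receives more than $3$: if not, Lemma~\ref{lemma10} gives $r(P)\le 1+\frac 52+3+\frac 52+1=10$; if so, Lemma~\ref{lemma11} forces $u_1,v_1\le\frac 32$ while $u_2\le 5$ (as $n(u_2)\le 2$ bounds each associated path by order $2$), giving $r(P)\le 1+\frac 32+5+\frac 32+1=10$.

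For $n(P)=6$, I distinguish whether $u_2$ and $v_2$ together receive more than $5$: if not, Lemma~\ref{lemma10} gives $r(P)\le 1+1+\frac 52+\frac 52+5=12$; if so, Lemma~\ref{lemma12} directly yields $r(P)\le 12$. For $n(P)=7$, I distinguish whether the center $u_3$ receives more than $5$: if not, Lemma~\ref{lemma10} (which here also caps $u_2,v_2\le 1$) gives $r(P)\le 1+1+1+1+\frac 52+\frac 52+5=14$; if so, Lemma~\ref{lemma13} gives $u_0+u_1\le\frac 52$ and $v_0+v_1\le\frac 52$ while $u_3\le 7$ (as $n(u_3)\le 3$), yielding $r(P)\le\frac 52+\frac 52+1+1+7=14$. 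In every case $r(P)\le 2n(P)$, so $|V(\mcQ)|\le 2|V(\mcP)|$ and {\sc Approx2} is a $2$-approximation. Given Lemmas~\ref{lemma10}--\ref{lemma13}, this part is an assembly of the four case bounds, and the conceptual difficulty has already been discharged in those lemmas.

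For the lower bound, I would exhibit one fixed graph $G$ together with a terminal collection $\mcP$ of {\sc Approx2} covering exactly $9$ vertices (for instance a $4$-path and a $5$-path, obtained as a plausible sequence of additions and extensions) and an optimal collection $\mcQ$ covering $16$ vertices (for instance four vertex-disjoint $4$-paths), so that the ratio equals $\frac{16}{9}$. The vertices of the optimal $4$-paths would be interleaved with those of $\mcP$ through carefully placed edges. I expect the \emph{main obstacle} to be verifying that $\mcP$ is genuinely terminal, i.e.\ that none of Add, Rep, DoubleRep, Re-cover and Look-ahead applies: the Look-ahead and Re-cover operations probe two-step and pairwise rearrangements, so every edge of the gadget must be checked against these stronger moves, and the construction must be tuned so that no rearrangement ever covers an extra vertex or creates an extra $4$-path while still permitting the stated suboptimal and optimal coverings.
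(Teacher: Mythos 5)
Your running-time argument and your ratio-$2$ analysis are correct and follow essentially the same route as the paper: the paper also bounds the number of iterations via Lemmas~\ref{lemma02} and \ref{lemma09} (coverage increases at most $n$ times, and Re-cover can only fire $O(n)$ times in between since it strictly increases the count of $4$-paths), and its proof of the ratio is exactly your four-way case split on $n(P)\in\{4,5,6,7\}$, with the same dichotomy on whether the center vertex (or the pair $u_2,v_2$ when $n(P)=6$) receives more than the threshold, invoking Lemmas~\ref{lemma10}--\ref{lemma13} to get $7,10,12,14$ respectively. The arithmetic in each of your cases matches the paper's.

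The genuine gap is the lower bound: you describe a plan (a terminal $\mcP$ covering $9$ vertices against an optimum covering $16$) but never produce a graph, and you yourself flag that verifying terminality against all five operations is the main obstacle. That verification is the entire content of this part of the theorem, so as written the $\frac{16}{9}$ claim is unproven. The paper's witness (Figure~\ref{fig05}) is a $32$-vertex graph in which {\sc Approx2} can terminate with two $5$-paths and two $4$-paths ($18$ vertices) while an optimal collection covers all $32$; note it does \emph{not} use your proposed $9$-versus-$16$ configuration, and it is not obvious that a gadget of that smaller size exists, since a terminal $4$-path plus $5$-path only certifies $r \le 7+10 = 17$ and the many edges needed to realize four disjoint $4$-paths on $16$ vertices make it hard to avoid triggering Add, DoubleRep, Re-cover or Look-ahead. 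To complete the proof you would either need to exhibit and check such a graph explicitly or adopt a construction like the paper's.
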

\begin{proof}
Similar to the proof of Theorem~\ref{thm01}, we will show that $\frac {r(P)}{n(P)} \le 2$ for any path $P \in \mcP$,
where $r(P)$ denotes the total number of vertices received by all the vertices on $P$ through the amortization scheme.
We do this by differentiating the path order $n(P)$, which is in between $4$ and $7$.

Case 1. $n(P) = 4$. 
By Lemma~\ref{lemma10}, we have $r(P) \le 1 + \frac 52 + \frac 52 + 1 = 7$.
It follows that $\frac {r(P)}{n(P)} \le \frac 74$.

Case 2. $n(P) = 5$.
If the vertex $u_2$ receives at most $3$ vertices, then by Lemma~\ref{lemma10}, we have $r(P) \le 3 + 7 = 10$.
Otherwise, $u_2$ receives at most $5$ vertices, and by Lemma~\ref{lemma11} each of $u_1$ and $v_1$ receives at most $\frac 32$ vertices.
It follows again by Lemma~\ref{lemma10} that $r(P) \le 5 + 5 = 10$.
That is, either way we have $r(P) \le 10$ and thus $\frac {r(P)}{n(P)} \le 2$.

Case 3. $n(P) = 6$. 
If the vertices $u_2$ and $v_2$ together receive at most $5$ vertices, then by Lemma~\ref{lemma10}, we have $r(P) \le 5 + 7 = 12$.
Otherwise, by Lemma~\ref{lemma12} we have $r(P) \le 12$.
That is, either way we have $r(P) \le 12$ and thus $\frac {r(P)}{n(P)} \le 2$.

Case 4. $n(P) = 7$. 
If the vertex $u_3$ receives at most $5$ vertices, then by Lemma~\ref{lemma10}, we have $r(P) \le 5 + 2 + 7 = 14$.
Otherwise, $u_3$ receives at most $7$ vertices,
and by Lemma~\ref{lemma13} the vertices $u_0$ and $u_1$ ($v_0$ and $v_1$, respectively) together receive at most $\frac 52$ vertices.
It follows again by Lemma~\ref{lemma10} that $r(P) \le 7 + 2 + 5 = 14$.
That is, either way we have $r(P) \le 14$ and thus $\frac {r(P)}{n(P)} \le 2$.

This proves that {\sc Approx2} is a $2$-approximation algorithm.

One might wonder whether the performance analysis can be done better.
Though we are not able to show the tightness of the performance ratio $2$,
we give below a graph to show that $\frac {16}9$ is a lower bound.

\begin{figure}[ht]
\centering
  \setlength{\unitlength}{1bp}%
  \begin{picture}(194.55, 152.67)(0,0)
  \put(0,0){\includegraphics{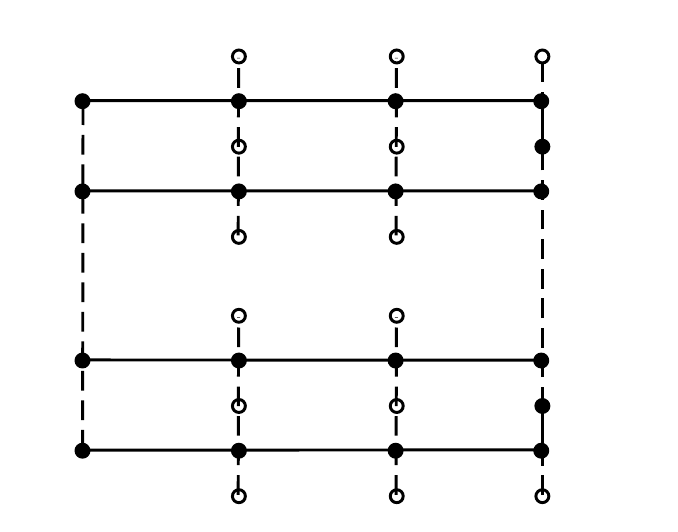}}
  \put(5.67,128.37){\fontsize{11.38}{13.66}\selectfont $u_0$}
  \put(70.61,128.37){\fontsize{11.38}{13.66}\selectfont $u_1$}
  \put(116.07,128.37){\fontsize{11.38}{13.66}\selectfont $u_2$}
  \put(5.67,99.15){\fontsize{11.38}{13.66}\selectfont $v_0$}
  \put(70.61,102.40){\fontsize{11.38}{13.66}\selectfont $v_1$}
  \put(116.07,102.40){\fontsize{11.38}{13.66}\selectfont $v_2$}
  \put(5.67,53.69){\fontsize{11.38}{13.66}\selectfont $w_0$}
  \put(70.61,53.69){\fontsize{11.38}{13.66}\selectfont $w_1$}
  \put(116.07,53.69){\fontsize{11.38}{13.66}\selectfont $w_2$}
  \put(5.67,27.72){\fontsize{11.38}{13.66}\selectfont $x_0$}
  \put(70.61,27.72){\fontsize{11.38}{13.66}\selectfont $x_1$}
  \put(116.07,27.72){\fontsize{11.38}{13.66}\selectfont $x_2$}
  \put(51.13,138.11){\fontsize{11.38}{13.66}\selectfont $y_0$}
  \put(51.13,112.14){\fontsize{11.38}{13.66}\selectfont $y_1$}
  \put(51.13,82.92){\fontsize{11.38}{13.66}\selectfont $y_2$}
  \put(96.59,138.11){\fontsize{11.38}{13.66}\selectfont $y_3$}
  \put(96.59,112.14){\fontsize{11.38}{13.66}\selectfont $y_4$}
  \put(96.59,82.92){\fontsize{11.38}{13.66}\selectfont $y_5$}
  \put(51.13,8.12){\fontsize{11.38}{13.66}\selectfont $z_0$}
  \put(51.13,37.46){\fontsize{11.38}{13.66}\selectfont $z_1$}
  \put(51.13,60.30){\fontsize{11.38}{13.66}\selectfont $z_2$}
  \put(96.59,8.12){\fontsize{11.38}{13.66}\selectfont $z_3$}
  \put(96.59,37.46){\fontsize{11.38}{13.66}\selectfont $z_4$}
  \put(96.59,60.30){\fontsize{11.38}{13.66}\selectfont $z_5$}
  \put(158.02,128.37){\fontsize{11.38}{13.66}\selectfont $u_3$}
  \put(158.02,102.40){\fontsize{11.38}{13.66}\selectfont $v_3$}
  \put(158.02,53.69){\fontsize{11.38}{13.66}\selectfont $w_3$}
  \put(158.02,27.72){\fontsize{11.38}{13.66}\selectfont $x_3$}
  \put(138.53,138.11){\fontsize{11.38}{13.66}\selectfont $y_6$}
  \put(138.53,112.14){\fontsize{11.38}{13.66}\selectfont $y_7$}
  \put(138.53,8.12){\fontsize{11.38}{13.66}\selectfont $z_6$}
  \put(138.53,37.46){\fontsize{11.38}{13.66}\selectfont $z_7$}
  \end{picture}%
\caption{A graph of order $32$ to show that the performance ratio of the algorithm {\sc Approx2} is lower bounded by $\frac {16}9$.
	All the edges in the graph are shown, either solid or dashed.
	The $18$ filled vertices are covered by the collection of two $5$-paths and two $4$-paths computed by {\sc Approx2},
	and the edges on these paths are shown solid;
	the edges on an optimal collection of paths, which covers all the vertices, are shown dashed.\label{fig05}}
\end{figure}

The graph displayed in Figure~\ref{fig05} contains $32$ vertices.
An optimal solution covers all the vertices, and its paths are (vertical)
$u_0$-$v_0$-$w_0$-$x_0$,
$y_0$-$u_1$-$y_1$-$v_1$-$y_2$,
$z_0$-$x_1$-$z_1$-$w_1$-$z_2$,
$y_3$-$u_2$-$y_4$-$v_2$-$y_5$,
$z_3$-$x_2$-$z_4$-$w_2$-$z_5$,
$y_6$-$u_3$-$y_7$-$v_3$, and $w_3$-$z_7$-$x_3$-$z_6$.
Assuming the algorithm {\sc Approx2} adds four $4$-paths (horizontal)
$u_0$-$u_1$-$u_2$-$u_3$,
$v_0$-$v_1$-$v_2$-$v_3$,
$w_0$-$w_1$-$w_2$-$w_3$,
$x_0$-$x_1$-$x_2$-$x_3$, and then extends the first and the last to
$u_0$-$u_1$-$u_2$-$u_3$-$y_7$,
$x_0$-$x_1$-$x_2$-$x_3$-$z_7$.
Then none of the five operations can be applied to improve the solution, which covers a total of $18$ vertices only.
That is, the algorithm {\sc Approx2} only achieves a performance ratio of $\frac {16}9$ on the graph.
\end{proof}

\section{Numerical experiments}
\subsection{Instance graph generation}
Given the NP-hardness of the \Max{k} problem,
our simulation process generates instance graphs for each of which an optimal collection of paths covers all the vertices.
This way, we will not deal with computing an optimal solution.

In details, for a specified order $n$,
the simulation process each time selects a uniformly random integer $\ell$ from the range $[k, k+1, \cdots, 2k-1]$
and creates an $\ell$-path on $\ell$ new vertices.
It repeats so as long as there are at least $3k-1$ vertices to be generated.
When there are at most $2k-1$ vertices to be generated, the process simply generates the last path to make up exactly $n$ vertices in total;
or otherwise it generates the last two paths both to have order in the range $[k, k+1, \cdots, 2k-1]$ to make up exactly $n$ vertices in total.

Afterwards, the simulation process is given a density parameter $d \in [0, 1]$,
and uses it as the probability to add each edge, which is not on the above generated paths, to the graph.

Lastly, the simulation process randomly permutes the vertices.
Such a step is to prevent any algorithm from taking advantage of the vertex order.
At the end, the vertices are indexed from $0$ to $n-1$.

An instance graph generated in this way is labeled with the tuple $(k, n, d, i)$,
and we report the computational results using the ranges for these four parameters
$\{4, 8, 12, 24\}$, $\{50, 100, 200, 400\}$, $[0, 0.02]$ at increment $0.001$ and $\{0.025, 0.03, 0.04\}$, $[0, 99]$ at increment $1$,
respectively.

\subsection{Results}
The above instance simulation process and the two algorithms {\sc Approx1} and {\sc Approx2} are implemented in Python
(Version 3.7.0, with the supporting libraries ``numpy'', ``sys'', ``time'').
Recall that the theoretical time complexities of {\sc Approx1} and {\sc Approx2} are $O(n^{k+1})$ and $O(n^8)$, respectively.
The experiments were done on a shared Linux computer with quad-core 3.0GHz CPU and 16GB Ram.

For each triple $(k, n, d)$, we generated independently $100$ instance graphs,
on which we collected the average and the worst performance ratios of {\sc Approx1}, and of {\sc Approx2} when $k = 4$.
These ratios are plotted in Figures~\ref{fig06} and \ref{fig07}.
We also collected the average running times of the two algorithms and plotted them in Figure~\ref{fig09}.

\begin{figure}[ht]
\centering
\mbox{\hspace{-0.2in}}\includegraphics[width=1.0\linewidth]{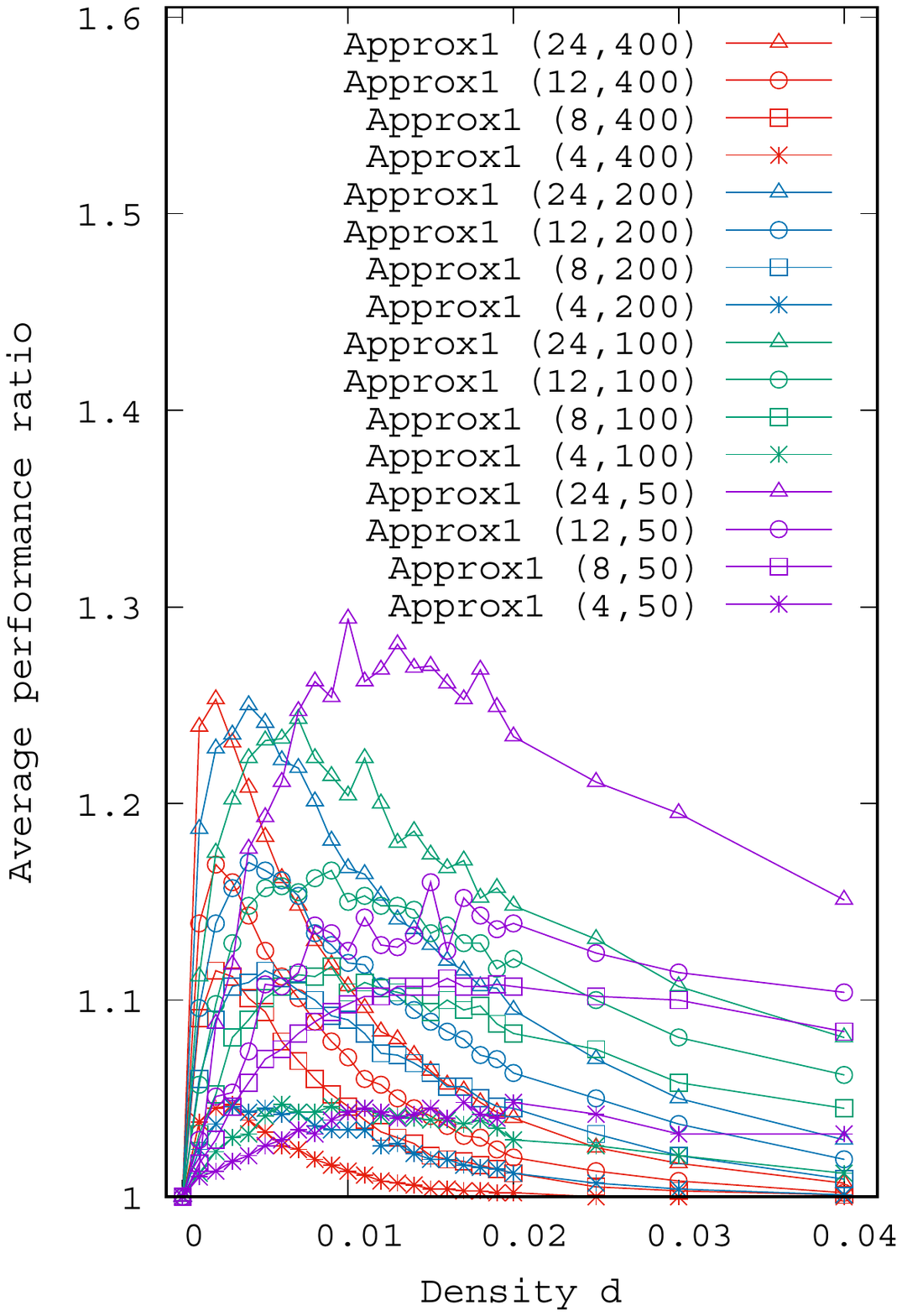}%
\mbox{\hspace{-3.2in}}\includegraphics[width=1.0\linewidth]{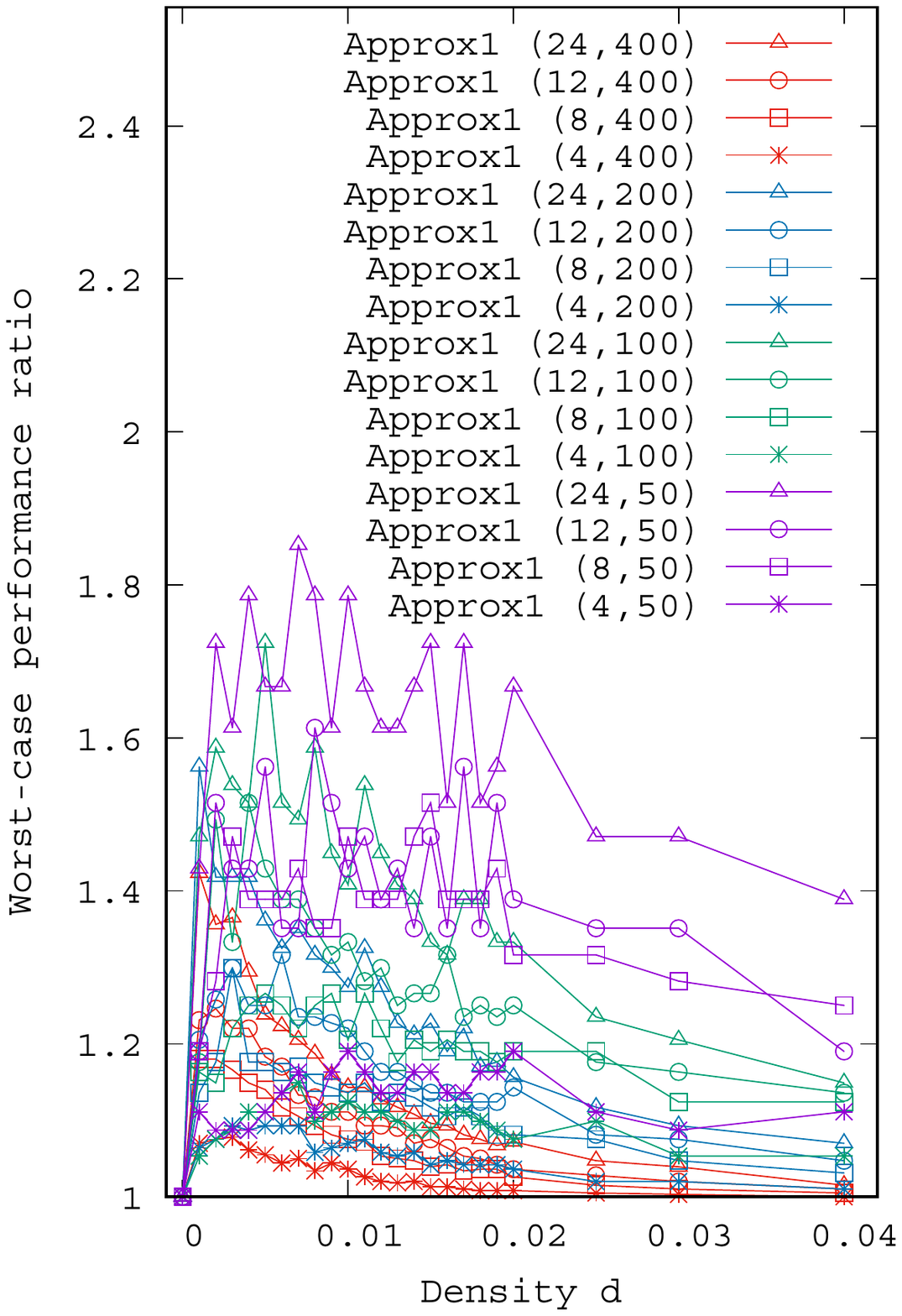}
\caption{The average (left) and the worst (right) performance ratios of {\sc Approx1} for \Max{k},
	on the $100$ instance graphs generated with the triple $(k, n, d)$.\label{fig06}}
\end{figure}

From the collected performance ratios for {\sc Approx1} as plotted in Figure~\ref{fig06},
we observed that for both the average and the worst performance ratios, the peaks shift right along with the edge density $d$.
That is, by the coloring scheme in the figure, from left to right they are red, blue, green and purple, in order.
More specifically, the colored peaks are roughly at $d =$ 0.002-0.003, 0.004-0.005, 0.008-0.01, and 0.01-0.02,
which is around $\frac 1n$ for $n = 400, 200, 100$ and $50$, respectively.
This means, the number of edges in a difficult instance graph is roughly $2n$.
Such an observation is further confirmed on the instance graphs simulated for \Max{4},
on which the collected performance ratios for {\sc Approx1} and for {\sc Approx2}, as plotted in Figure~\ref{fig07}, show the same pattern.

We also observed that when the edge density $d = 0$, that is, no additional edges are added to the simulated optimal path collection,
both {\sc Approx1} and {\sc Approx2} found the optimal solution.
This is non-surprising due to the Rep operation (Operation~\ref{op2}) which extends any subpath to the simulated path.
On the other hand, the average performance ratios drop quickly when $d$ gets larger.
Furthermore, for larger orders $n$, they drop faster (for example, the red plots associated with $n = 400$ in Figures~\ref{fig06} and \ref{fig07}).

Among all the edge density values, the worst (i.e. the peak) of the average performance ratios of {\sc Approx1}
appears to be independent of the order $n$ of the instance graph.
For example, when $k = 24$, they are $1.253, 1.250, 1.243, 1.294$, for $n = 400, 200, 100, 50$, respectively.
However, the worst of the worst performance ratios seems to get better as the order $n$ of the instance graph increases.
For example, when $k = 24$, they are $1.423, 1.562, 1.724, 1.852$, for $n = 400, 200, 100, 50$, respectively.
This observation is further confirmed on the instances simulated for \Max{4}, and in fact even applies to {\sc Approx2},
as one sees their performance rations plotted in Figure~\ref{fig07}.
Specifically, the peaks of the average performance ratios of {\sc Approx1} and {\sc Approx2} are around $1.047$ and $1.027$, respectively.

That said, with respect to $k$, the average performance ratios of {\sc Approx1} gets worse as $k$ increases.
This can be seen by observing the groups of four line plots of the same color in the left side of Figure~\ref{fig06}.
For the worst performance ratios, the tendency remains true in general, but could be zigzag as shown in the right side of Figure~\ref{fig06}.

\begin{figure}[ht]
\centering
\mbox{\hspace{-0.2in}}\includegraphics[width=1.0\linewidth]{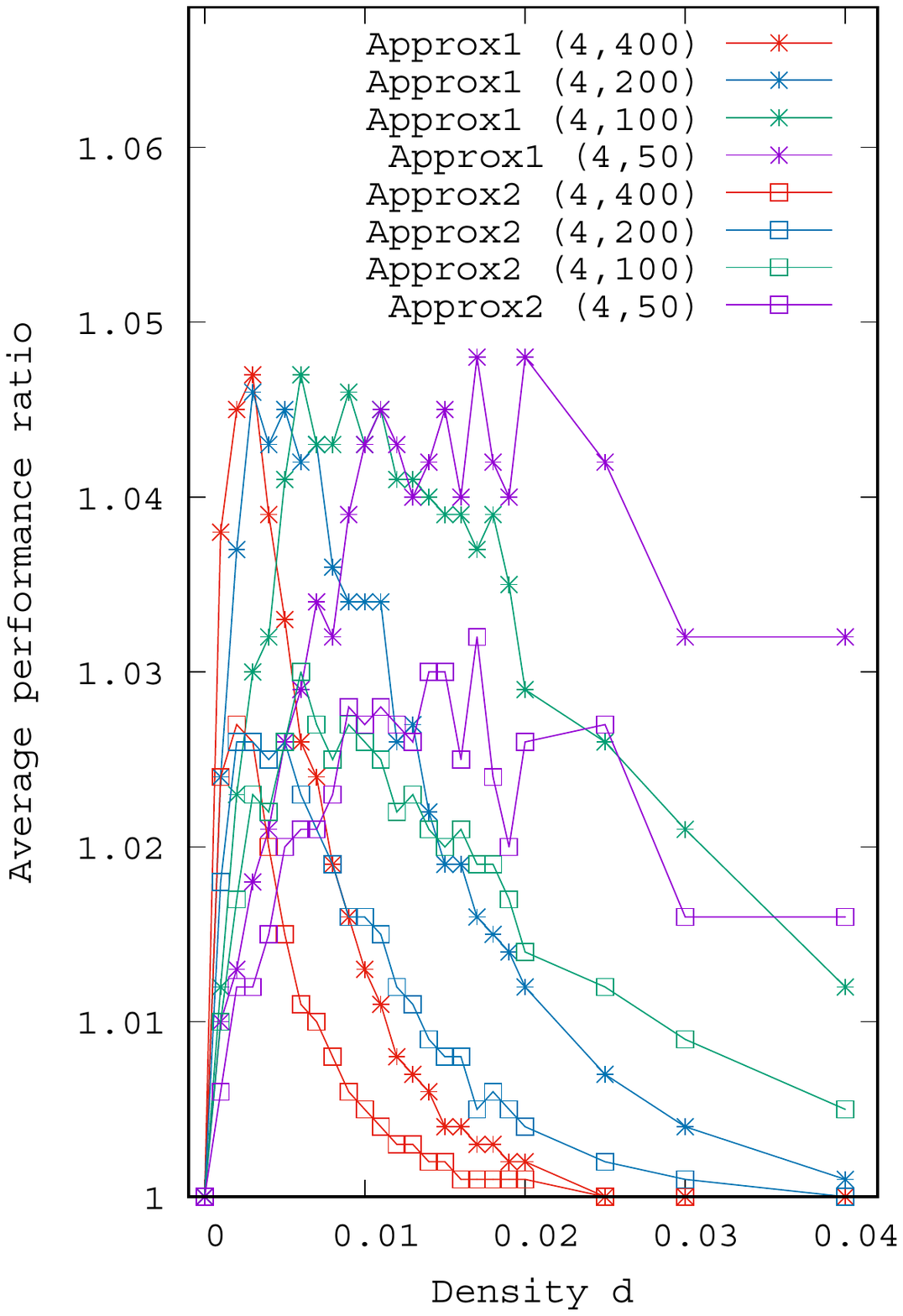}%
\mbox{\hspace{-3.2in}}\includegraphics[width=1.0\linewidth]{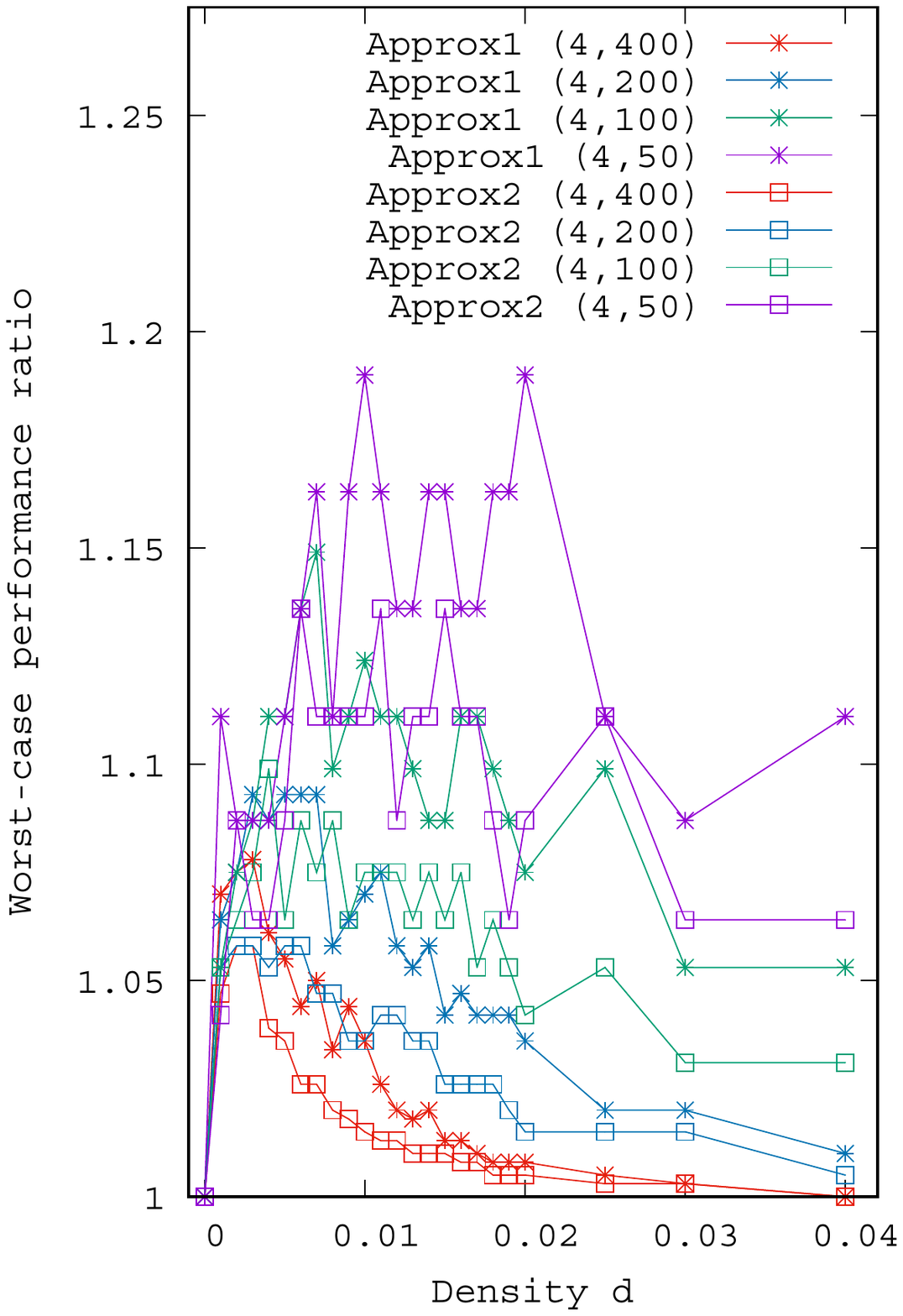}
\caption{The average and the worst performance ratios of {\sc Approx1} and of {\sc Approx2} for \Max{4},
	on the $100$ instance graphs generated with the triple $(4, n, d)$.\label{fig07}}
\end{figure}

On all instance graphs simulated for \Max{4}, {\sc Approx2} consistently outperformed {\sc Approx1}
in terms of both the average and the worst performance ratios (of course, except those instances where {\sc Approx1} achieves the optimal solutions).
It is interesting to observe that the patterns of the average performance ratios of {\sc Approx1} and {\sc Approx2}
look roughly the same as shown in the left side of Figure~\ref{fig07};
their peaks are also at roughly the same edge density $d$, and the largest difference between them appears to be at their peaks.
The worst performance ratios of the two algorithms show the similar tendency (right side of Figure~\ref{fig07}),
but not as clearly as the average performance ratios.
As we have said earlier, in terms of the difference between the average performance ratios of the two algorithms,
we observed that the order of the instance graph does not seem to have a significant impact;
this is further confirmed by the line plots of asterisks in Figure~\ref{fig08},
which are the average difference between the performance ratios on each set of $100$ instances.
However, the line plots of squares in Figure~\ref{fig08} show that the largest difference between the performance ratios of the two algorithms
decreases along with the the order of the instance graph.

\begin{figure}[ht]
\centering
	\includegraphics[width=0.9\linewidth]{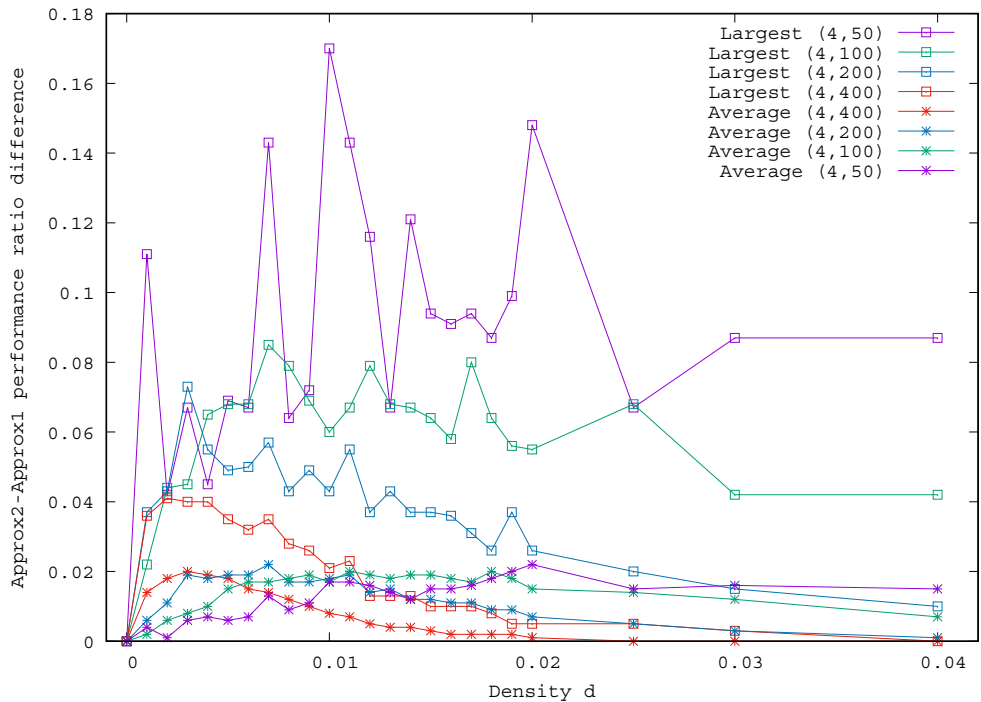}
\caption{The average and the largest difference between the performance ratios of {\sc Approx1} and of {\sc Approx2} for \Max{4},
	on the $100$ instance graphs generated with the triple $(4, n, d)$.\label{fig08}}
\end{figure}

Note that for \Max{k}, the basic operation in the algorithms {\sc Approx1} and {\sc Approx2} is
examining whether or not an edge is inside the instance graph.
In our experiments, this is to check whether an adjacency matrix entry is $1$ or not.
That is, the basic operation is not really an arithmetic operation, and thus we chose to implement our algorithms in Python.
We collected all the average running times of {\sc Approx1}, and of {\sc Approx2} when $k = 4$,
over the $100$ instance graphs associated with the same triple $(k, n, d)$,
and plotted them in Figure~\ref{fig09}.
As expected, for larger graphs, both algorithms took longer times,
shown in Figure~\ref{fig09} by the coloring scheme,
where red lines correspond to instance graphs of order $n = 400$ and purple lines correspond to order $n = 50$.
One sees that the choice of Python is not bad, as the largest average running time in our experiments is slightly less than $20$ seconds.

Observing the line plots of the same shape in the left side of Figure~\ref{fig09}, corresponding to a common value of $k$,
we can safely claim that the average running time increases along with $k$, and it increases more quickly than with $n$.
For the pair $(n, k)$, the average running time shows a similar pattern as the performance ratio with respect to the edge density $d$,
that each line plot has a peak although most peaks are at $d = 0.001$.
It is worth pointing out that at $d = 0$, {\sc Approx1} spent a non-trivial amount of time on instance graphs of order $n = 200$ and $n = 400$
(left side of Figure~\ref{fig09}).
Between {\sc Approx1} and {\sc Approx2}, we observed that on average {\sc Approx2} took much longer on the same instance (right side of Figure~\ref{fig09},
by comparing the line plots of the same color).
Nevertheless, the longer running time is paid off by its performance ratio,
which is roughly $44\%$ better than {\sc Approx1} (left side of Figure~\ref{fig07} and the line plots with the asterisks in Figure~\ref{fig08}).

\begin{figure}[ht]
\centering
\mbox{\hspace{-0.2in}}\includegraphics[width=1.0\linewidth]{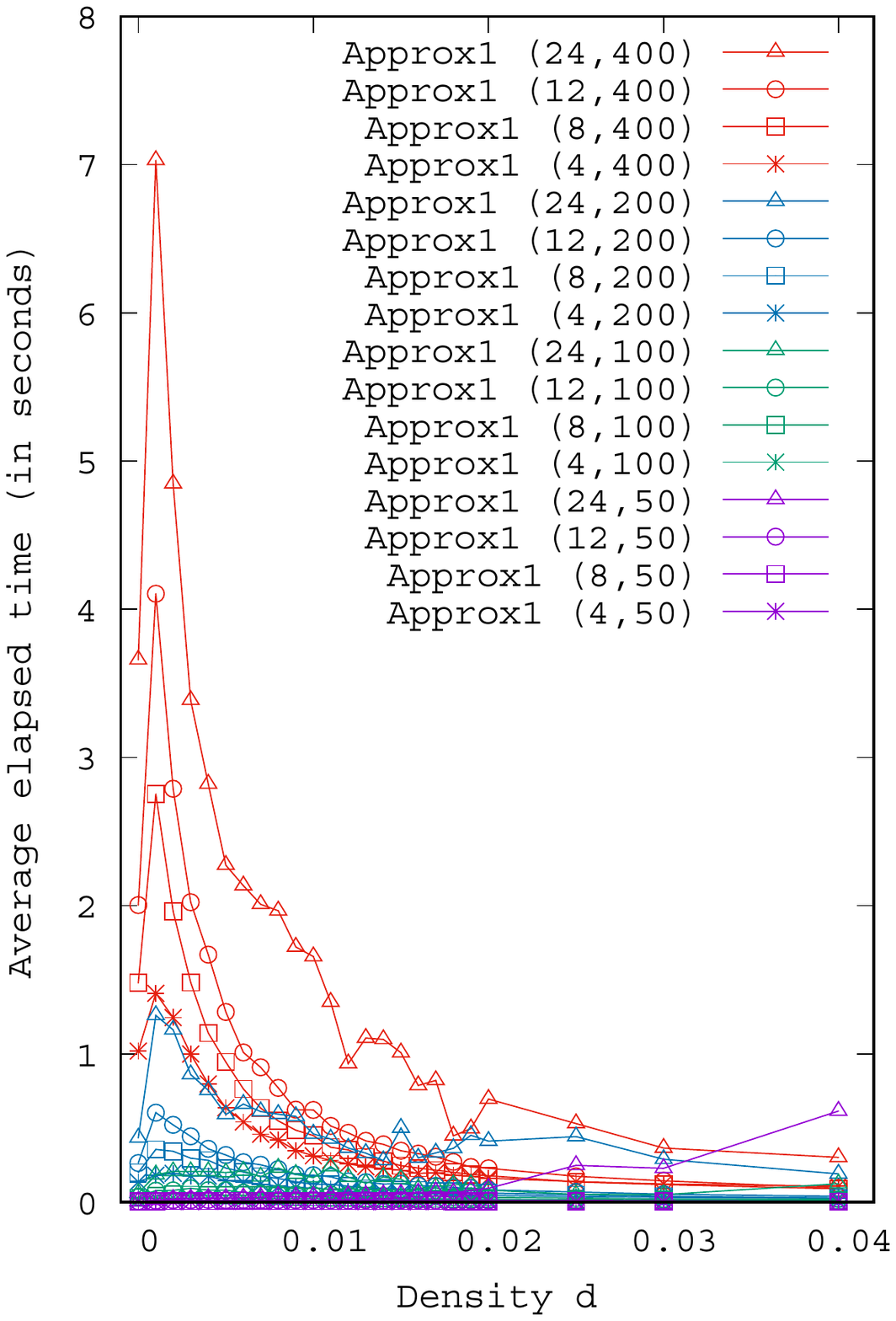}%
\mbox{\hspace{-3.2in}}\includegraphics[width=1.0\linewidth]{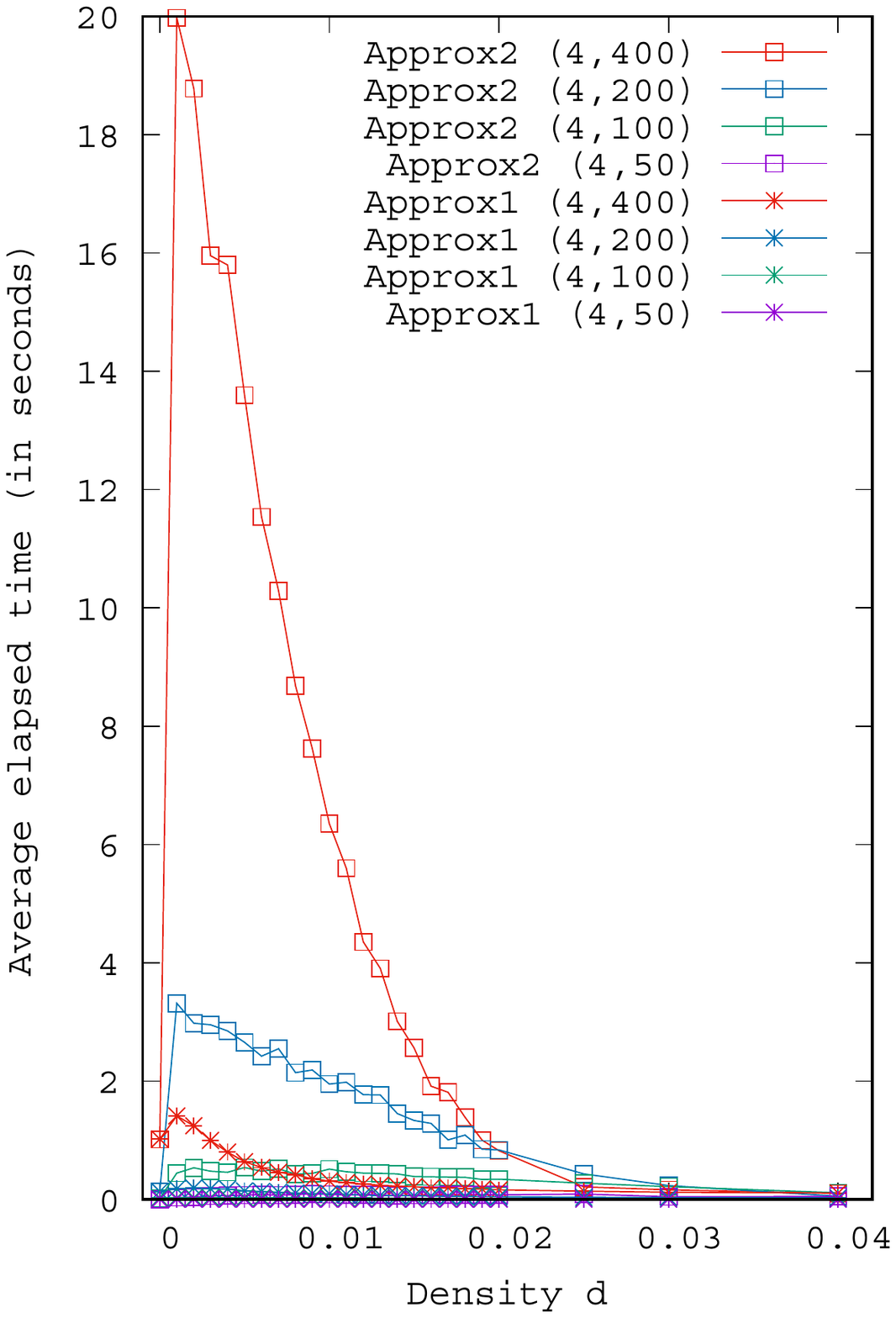}
\caption{The average running times of {\sc Approx1} (left), and of {\sc Approx2} when $k = 4$ (right),
	on the $100$ instance graphs generated with the triple $(k, n, d)$.\label{fig09}}
\end{figure}

\section{Conclusion}
In this paper, we studied the general vertex covering problem \Max{k}, where $k \ge 4$,
to find a collection of vertex-disjoint paths of order at least $k$ to cover the most vertices in the input graph.
The problem seemingly escapes from the literature,
but it admits a $k$-approximation algorithm by reducing to the maximum weighted $(2k-1)$-set packing problem~\cite{Neu21}.
We proposed the first direct $(0.4394 k + O(1))$-approximation algorithm
and an improved $2$-approximation algorithm when $k = 4$.
Both algorithms are local improvement based on a few operations, and we proved their approximation ratios via amortized analyses.
We have also designed numerical experiments to demonstrate the practical performance of the two algorithms,
and as expected they both performed well.
For example, across thousands of instances for \Max{4} ($800$ of them were reported),
they both ran very fast with the longest running time less than a minute on a typical computer,
and the observed worst performance ratio is $1.163$.

We suspect our amortized analyses are tight, and it would be interesting to either show the tightness 
or improve the analyses. 
For designing improved approximation algorithms,
one can look into whether the two new operations in {\sc Approx2} for $k = 4$ can be helpful for $k \ge 5$;
other different ideas might also work, for example,
one can investigate whether or not a maximum path-cycle cover~\cite{Gab83} can be taken advantage of.

On the other hand, we haven't addressed whether or not the \Max{k} problem, for a fixed $k \ge 4$, is APX-hard,
and if it is so, then it is worthwhile to show some non-trivial lower bounds on the approximation ratio, even only for $k = 4$.



\end{document}